\documentclass[11pt]{article}
\pdfoutput=1
\usepackage[centertags]{amsmath}
\usepackage[square, comma, sort&compress,numbers]{natbib}
\usepackage{array,multirow}
\numberwithin{equation}{section}
\usepackage{amssymb,amsfonts,amsthm}
\usepackage{graphicx}
\usepackage{color}
\usepackage{mathtools,bm}
\usepackage{accents}
\usepackage{soul}

\usepackage{epsfig}
\usepackage{bbold}

\usepackage{wrapfig}
\usepackage{float}
\usepackage{soul}
\usepackage{tkz-euclide}
\usepackage{braket}

\usepackage{tikz,pgf}
\usetikzlibrary{shapes}
\usetikzlibrary{calc}
\usetikzlibrary{decorations.pathmorphing}
\usetikzlibrary{decorations.pathreplacing,shapes.misc}
\usetikzlibrary{positioning}
\usetikzlibrary{arrows}
\usetikzlibrary{decorations.markings}
\usetikzlibrary{shadings}

\usetikzlibrary{intersections}

\def\be{\be}
\def\ee{\ee}
\def\ba{\begin{array}}
\def\ea{\end{array}}

\def\bst{\begin{split}}
\def\est{\end{split}}

\def\dps{\displaystyle}

\def\1{\tilde{1}}
\def\2{\tilde{2}}
\def\3{\tilde{3}}

\newdimen\tableauside\tableauside=1.0ex
\newdimen\tableaurule\tableaurule=0.4pt
\newdimen\tableaustep
\def\phantomhrule#1{\hbox{\vbox to0pt{\hrule height\tableaurule
width#1\vss}}}
\def\phantomvrule#1{\vbox{\hbox to0pt{\vrule width\tableaurule
height#1\hss}}}
\def\sqr{\vbox{%
\phantomhrule\tableaustep

\hbox{\phantomvrule\tableaustep\kern\tableaustep\phantomvrule\tableaustep}%
\hbox{\vbox{\phantomhrule\tableauside}\kern-\tableaurule}}}
\def\squares#1{\hbox{\count0=#1\noindent\loop\sqr
\advance\count0 by-1 \ifnum\count0>0\repeat}}
\def\tableau#1{\vcenter{\offinterlineskip
\tableaustep=\tableauside\advance\tableaustep by-\tableaurule
\kern\normallineskip\hbox
{\kern\normallineskip\vbox
{\gettableau#1 0 }%
\kern\normallineskip\kern\tableaurule}%
\kern\normallineskip\kern\tableaurule}}
\def\gettableau#1 {\ifnum#1=0\let\next=\null\else
\squares{#1}\let\next=\gettableau\fi\next}

\newtheorem{prop}{Proposition}[section]

\newcommand{\bref}[1]{\textbf{\ref{#1}}}

\newcommand{\RR}{\mathbb{R}}

\def\cD{\mathcal{D}}

\def\cF{\mathcal{F}}

\def\cH{\mathcal{H}}

\def\cM{\mathcal{M}}

\def\cR{\mathcal{R}}

\numberwithin{equation}{section} \makeatletter
\@addtoreset{equation}{section}

\def\be{\begin{equation}}
\def\ee{\end{equation}}
\def\ba{\begin{array}}
\def\ea{\end{array}}

\def\dps{\displaystyle}

\def\ba{\begin{array}}
\def\ea{\end{array}}

\def\dps{\displaystyle}

\def\rvac{|0\rangle}

\def\bx{{\bf x}}

\def\Li2{\operatorname{Li_2}}

\def\Tr{\operatorname{Tr}}
\def\C2{\text{C}_2}

\def\tcpw{{\tt t}CPW }
\def\tcpws{{\tt t}CPWs }
\def\tcft{{\tt t}CFT }

\def\dcpw{{\tt d}CPW }
\def\dcpws{{\tt d}CPWs }
\def\dcft{{\tt d}CFT }
\def\dcfts{{\tt d}CFTs }
\def\fcpw{{\tt f}CPW }
\def\fcpws{{\tt f}CPWs }
\def\fcft{{\tt f}CFT }

\def\tPsi{{}^{(t)}\Psi_{\Delta}^{\Delta_1,...,\Delta_4}(\bx)}
\def\tG{{}^{(t)}G_{\Delta}^{\Delta_1,...,\Delta_4}(\bx)}
\def\tGs{{}^{(t)}G_{\widetilde\Delta}^{\Delta_1,...,\Delta_4}(\bx)}

\def\tPsiDL{{}^{(t)}\Psi_{\Delta}^{\Delta,\Delta_\phi, \Delta,\widetilde\Delta_\phi}(x, qx,0,y)}

\tikzset{
    block/.style={
        draw=black,
        line width=0.9pt
    },
    snakeblock/.style={
        draw=black,
        line width=ultra thick
    },
    propagator/.style={
        draw=black,
        line width=2.6pt
    },
}

\usepackage{jheppub}
\makeatletter
\def\@fpheader{\vspace{-.1cm}}
\makeatother

\title{\centering{
    Thermal conformal partial waves\\
    from flat-space and defect CFT
    }}

\author{Konstantin Alkalaev}
\author{\;\;Semyon Mandrygin}
\author{\;\;Vladimir Samsonov}
\affiliation{I.E. Tamm Department of Theoretical Physics \\P.N. Lebedev Physical
Institute, 119991 Moscow, Russia}

\emailAdd{alkalaev@lpi.ru, semyon.mandrygin@gmail.com, samsonov.vs@phystech.edu}

\abstract{We establish a  correspondence between conformal partial waves on flat, thermal, and defect backgrounds using the shadow formalism. We demonstrate that scalar one-point thermal blocks can be systematically obtained from their four-point flat-space and  two-point  defect counterparts by considering specific operator configurations. This framework allows us to derive the thermal Casimir equation as a  diagonal reduction of the flat-space Casimir system without introducing chemical potentials. We further show that  defect two-point blocks with a spin-$l$ exchange  operator  correspond to thermal one-point blocks for an external spin-$l$ operator.}

\begin{document}
\maketitle

\flushbottom

\section{Introduction}

In a $d$-dimensional conformal field theory (CFT), the Hilbert space of states $\cH$ is organized into $o(d+1,1)$ Verma modules. Each module is generated by a primary operator together with its descendants. For correlation functions on flat space $\RR^d$, this representation structure is implemented through the operator product expansion (OPE), which organizes the correlators into sums over conformal families, with the kinematic dependence given by conformal blocks. The corresponding model-dependent expansion coefficients are constrained by OPE associativity, leading to the conformal bootstrap equations \cite{Belavin:1984vu,Cardy:1986ie,Rattazzi:2008pe}.

A natural generalization is obtained by placing the CFT on more general backgrounds, including thermal manifolds $S^1_\beta \times \cM^{d-1}$ and $\RR^d$ in the presence of flat $p$-dimensional defects. Finite temperature and defects break the full conformal symmetry $o(d+1,1)$, leading to more intricate kinematics and introducing new conformal data. Together with the ordinary OPE coefficients, this data is subject to additional consistency constraints, often referred to as the thermal and defect bootstrap \cite{Iliesiu:2018fao,Liendo:2012hy}. However, these programs are less developed than the flat-space conformal bootstrap: standard numerical methods are not directly applicable, and a systematic understanding of the relevant conformal blocks is still lacking.


In this paper, we address the latter issue by analyzing conformal partial waves (CPWs) and the corresponding conformal blocks on flat, thermal, and defect backgrounds.\footnote{Flat-space, thermal, and defect CPWs will be denoted as \fcpw\hspace{-1.5mm}, \tcpw\hspace{-1.5mm}, and \dcpw\hspace{-1.5mm}, while the respective CFTs will be denoted as \fcft\hspace{-1.5mm}, \tcft\hspace{-1.5mm}, and \dcft\hspace{-1.5mm}.}
Our central claim is that these seemingly different settings are nevertheless closely related: thermal blocks can be systematically obtained from both flat-space and defect CFT constructions. This framework may therefore provide a new perspective on bootstrapping thermal and defect CFTs.


\paragraph{Thermal CFT.} Thermal conformal correlation functions are defined as traces over the same  Hilbert space $\mathcal{H}$, with states weighted by the dilatation operator $D$ \cite{El-Showk:2011yvt}. For a scalar primary operator $\phi(x)$ of conformal dimension $\Delta_\phi$, with $x \in \mathbb{R}^d$, the thermal one-point function is given by
\be
\label{th_corr_def}
\big\langle \phi(x) \big\rangle_{\beta} = \Tr_{\mathcal{H}} \Big[\,\phi(x)\, q^D  \,\Big]\,,
\qquad
q = e^{-\beta}\,,
\ee
where $\beta$ is the inverse temperature. The introduction of temperature  breaks the full conformal symmetry $o(d+1,1)$ down to the subalgebra $o(1,1)\oplus o(d)$ commuting with  $D$ \cite{Gobeil:2018fzy}. As a result,  thermal averages  are subject to fewer Ward identities compared to correlation functions on $\mathbb{R}^d$. Nevertheless, the residual symmetries are sufficient to completely fix the $x$-dependence of this observable, which remains a non-trivial function of the thermal parameter $q$.

When expressed in cylindrical coordinates, the correlator \eqref{th_corr_def} is periodic in Euclidean time and therefore defines the thermal observable on $S^1_\beta \times S^{d-1}$. In the high-temperature limit $\beta \to 0$, this geometry effectively reduces to $S^1_\beta \times \mathbb{R}^{d-1}$, where the one-point function is fixed up to a model-dependent constant \cite{El-Showk:2011yvt},
\be
\big\langle \phi(x) \big\rangle_{S^1_\beta \times \mathbb{R}^{d-1}} = \frac{b_\phi}{\beta^{\Delta_\phi}}\,.
\ee
Moreover, two-point functions on $S^1_\beta \times \mathbb{R}^{d-1}$ admit an explicit conformal block decomposition in terms of Gegenbauer polynomials \cite{Iliesiu:2018fao}. This currently well-understood structure has driven recent progress in the thermal conformal bootstrap \cite{Petkou:2018ynm, Marchetto:2023xap, Barrat:2025wbi, Barrat:2025nvu, Barrat:2025twb, Niarchos:2025cdg, Buric:2025anb}.

In contrast, for thermal correlators on $S^1_\beta \times S^{d-1}$,  the systematic understanding is much less developed, despite important recent progress \cite{Gobeil:2018fzy, Kraus:2017ezw, Alkalaev:2023evp, Alkalaev:2024jxh, David:2024pir, David:2025tqn, Ammon:2025cdz, Buric:2024kxo, Buric:2025uqt, Anand:2025mfh, Buric:2025fye}.\footnote{Related directions include searching  for higher-dimensional analogues of modular invariance \cite{Cardy:1991kr, Lei:2024oij, Allameh:2024qqp, Shaghoulian:2016gol, DiPietro:2014bca}, as well as the study of CFTs on higher genus manifolds \cite{Benjamin:2023qsc, Simmons-Duffin:2025qox}.}
The main difficulty is that even the one-point function \eqref{th_corr_def} admits a non-trivial decomposition into conformal blocks \cite{El-Showk:2011yvt, Gobeil:2018fzy}:
\be
\label{th_corr_CB_expansion}
\big\langle \phi(x) \big\rangle_{\beta} = \sum_{\Delta} C_{\Delta \Delta_\phi \Delta}\, \cF_{\Delta}^{\Delta_\phi}(q, x)  + \text{spinning contributions}\,,
\ee
where $C_{\Delta \Delta_\phi \Delta}$ are the OPE coefficients. The scalar thermal conformal block $\cF_{\Delta}^{\Delta_\phi}(q, x)$ can be represented as an infinite sum over matrix elements, reflecting the underlying trace over  $\mathcal{H}$. This leads to an asymptotic expansion at  small $q$:
\be
\label{tF_asymptot}
\cF_{\Delta}^{\Delta_\phi}(q, x) = |x|^{-\Delta_\phi}\, q^{\Delta} \big( 1 + O(q)\big)\,,
\ee
where the $x$-dependence is fixed by the residual Ward identities.  To compute these blocks without evaluating the trace directly, one can employ auxiliary methods, such as the AdS integral representation \cite{Gobeil:2018fzy} or the shadow formalism \cite{Alkalaev:2024jxh}. These approaches yield closed-form expressions in terms of generalized hypergeometric functions, thereby making the thermal  blocks amenable for further analytic manipulations.

\paragraph{Thermal blocks from flat-space and defect CFTs.} Several developments suggest that thermal observables may admit a description in terms of more standard data of a different, and often more controllable, underlying theory. In particular, motivated by the Eigenstate Thermalization Hypothesis \cite{Deutsch:1991, Srednicki:1994}, it has been proposed  that correlation functions evaluated in heavy states reproduce thermal expectation values in CFTs with an effective temperature determined  by the conformal dimension of the corresponding heavy primary operator \cite{Asplund:2014coa, Fitzpatrick:2015qma, Lashkari:2016vgj}. More recently, thermal behavior characteristic of black holes was shown to be encoded in states of integrable spin chains \cite{Kristjansen:2025xqo}. Furthermore, it has been observed that certain four-point conformal integrals admit representations in terms of thermal one-point functions on $S_\beta^1 \times \mathbb{R}^{d-1}$ \cite{Petkou:2021zhg, Karydas:2023ufs, Karydas:2025tfs}, while a reduction of four-point parametric conformal integrals yields one-point thermal conformal partial waves on $S_\beta^1 \times S^{d-1}$ \cite{Alkalaev:2024jxh}. These results suggest that thermal correlators should not be viewed exclusively as independent observables, but rather as arising from more general data upon restriction to a specific regime. In the present work, we aim to make this relation precise at the level of conformal partial waves and conformal blocks.


\vspace{2mm}
\noindent {\it Flat-space CFT.} We claim that  one-point scalar thermal CPWs on $S_\beta^1 \times S^{d-1}$ arise from  the  diagonal limit of ordinary four-point CPWs in flat space $\mathbb{R}^d$. Namely, \tcpws with  external and intermediate dimensions $\Delta_\phi$ and  $\Delta$ are obtained from $t$-channel \fcpws with  intermediate dimension $\Delta$ and external dimensions
\be
\label{conformal_dimensions}
\Delta_1 =  \Delta\,,
\qquad
\Delta_2 = \Delta_\phi\,,
\qquad
\Delta_3 = \Delta\,,
\qquad
\Delta_4 = d - \Delta_\phi \equiv \widetilde \Delta_\phi\,,
\ee
evaluated in the  diagonal configuration,
\be
\label{diag1}
x_2 = q\, x_1\,,
\qquad
x_3 = 0\,,
\qquad
x_4 \to \infty\,,
\ee
see fig. \bref{fig:diag}. This specific placement of operator insertions is controlled by $q \in (0,1)$, which is identified with the thermal parameter.\footnote{It is worth noting the parallel with \cite{Petkou:2021zhg, Karydas:2023ufs, Karydas:2025tfs}, where temperature and chemical potential arise through a parametrization of cross-ratios.} Such a  correspondence between CPWs extends directly to the respective conformal blocks and their shadows, providing a one-to-one identification.

\begin{figure}
\centering
\begin{tikzpicture}[scale=1]

\draw[propagator] (0,-0.76) -- (0,0.76);

\draw[block] (0,0.74) -- (-1.0,1.64);
\draw[block] (0,0.74) -- (1.0,1.64);

\draw[block] (0,-0.74) -- (-1.0,-1.64);
\draw[block] (0,-0.74) -- (1.0,-1.64);

\fill (-1.0,1.64) circle (1.2pt);
\fill (1.0,1.64) circle (1.2pt);
\fill (-1.0,-1.64) circle (1.2pt);
\fill (1.0,-1.64) circle (1.2pt);

\node[left=2pt]  at (-1.0,1.56) {$x_1$};
\node[right=2pt] at (1.0,1.56) {$\infty$};

\node[left=2pt]  at (-1.0,-1.56) {$q x_1$};
\node[right=2pt] at (1.0,-1.56) {$0$};;

\node at (-1.0,1.13) {$\Delta$};
\node at (1.0,1.13) {$\widetilde{\Delta}_\phi$};

\node at (-1.0,-1.13) {$\Delta_\phi$};
\node at (1.0,-1.13) {$\Delta$};

\node[right=3pt] at (0,0) {$\Delta$};

\draw[
    -latex,
    line width=3.0pt,
    cyan!55!blue
] (2.0,0) -- (3.5,0);

\draw[propagator] (5.0,0.84) circle (0.8);

\draw[block] (5.0,0.04) -- (5.0,-1.56);

\fill (5.0,-1.56) circle (1.2pt);

\node[right=3pt] at (5.0,-1.56) {$x_1$};

\node[right=3pt] at (5.8,0.84) {$\Delta$};
\node[right=3pt] at (5.0,-0.76) {$\Delta_\phi$};

\end{tikzpicture}
\caption{Correspondence between the four-point CPW in the $t$-channel for the operator configuration \eqref{conformal_dimensions}--\eqref{diag1} and the one-point scalar thermal CPW.}
\label{fig:diag}
\end{figure}

\begin{figure}
\centering
\begin{tikzpicture}[scale=1]

\draw[propagator] (0,-1.46) -- (0,0.498);

\draw[block] (0,0.476) -- (-1.22,1.64);
\draw[block] (0,0.476) -- (1.22,1.64);

\node[
    star,
    star points=6,
    fill=red!75!,
    star point ratio=0.4,
    minimum size=0.198cm
] at (0,-1.46) {};

\fill (-1.22,1.64) circle (1.2pt);
\fill (1.22,1.64) circle (1.2pt);

\node[left=2pt]  at (-1.22,1.56) {$x_1$};
\node[right=2pt] at (1.22,1.56) {$q x_1$};

\node at (-1.10,1.05) {$\Delta$};
\node at (1.10,1.05) {$\widetilde{\Delta}$};

\node[right=3pt] at (0,-0.76) {$\Delta_\phi, l$};

\draw[
    -latex,
    line width=3.0pt,
    cyan!55!blue
] (2.0,0.04) -- (3.5,0.04);

\draw[propagator] (5.0,0.84) circle (0.8);

\draw[block] (5.0,0.04) -- (5.0,-1.56);

\fill (5.0,-1.56) circle (1.2pt);

\node[right=3pt] at (5.0,-1.56) {$x_1$};

\node[right=3pt] at (5.8,0.84) {$\Delta$};
\node[right=3pt] at (5.0,-0.76) {$\Delta_\phi, l$};

\end{tikzpicture}
\caption{Correspondence between the two-point defect CPW in the bulk channel for the operator configuration \eqref{conformal_dimensions_defect}--\eqref{diag2} (a point-like defect is represented by the red star) and the one-point spinning thermal CPW.}
\label{fig:defect}
\end{figure}


\vspace{2mm}

\noindent {\it Defect CFT.} It turns out that within the present framework introducing spin does not yield a flat-space CPW of one type or another. Rather, this generalization is naturally realized within the defect CFT.  We show that the one-point \tcpw with a spin-$l$ external operator and a scalar intermediate operator can be reformulated as a bulk-channel two-point \dcpw in the presence of a point-like defect. A characteristic feature of this correspondence is that the roles of  the external and intermediate  operators are interchanged. Namely, we consider a bulk-channel \dcpw  for two scalar external operators of conformal dimensions $\Delta_1$ and $\Delta_2$, and a spin-$l$ intermediate operator of conformal dimension $\Delta_k$. These conformal dimensions and positions of operators are related  to those  in the one-point \tcpw for an external spin-$l$ operator as follows
\be
\label{conformal_dimensions_defect}
\Delta_1=\Delta,\qquad
\Delta_2=d-\Delta \equiv \widetilde \Delta ,\qquad
\Delta_k=\Delta_\phi\,,
\ee
\be
\label{diag2}
x_2 = q\, x_1\,,
\ee
see fig. \bref{fig:defect}. There is no  direct identification between individual thermal and bulk conformal blocks. Instead, the thermal block receives contributions from both the bulk conformal block and its shadow combined to satisfy the thermal asymptotic  condition \eqref{tF_asymptot}. This differs from the flat-space case discussed above.

An auxiliary observation follows from comparing fig. \bref{fig:diag} and  fig. \bref{fig:defect}. The two-point \dcpw with dimensions \eqref{conformal_dimensions_defect} is related to the four-point \fcpw with dimensions \eqref{conformal_dimensions}, see fig. \bref{fig:defect-flat}. Importantly, this correspondence holds beyond the diagonal limit: in the flat-space configuration only two points are fixed, $x_3=0$ and $x_4 \to \infty$, while the remaining two points are kept arbitrary and become the positions of the bulk operators in the defect setup.

\begin{figure}
\centering
\begin{tikzpicture}[scale=1]

\draw[propagator] (5,-1.46) -- (5,0.496);

\draw[block] (5,0.476) -- (5-1.22,1.64);
\draw[block] (5,0.476) -- (5+1.22,1.64);

\node[
    star,
    star points=6,
    fill=red!75!,
    star point ratio=0.4,
    minimum size=0.198cm
] at (5,-1.46) {};

\fill (5-1.22,1.64) circle (1.2pt);
\fill (5+1.22,1.64) circle (1.2pt);

\node[left=2pt]  at (5-1.22,1.56) {$x_1$};
\node[right=2pt] at (5+1.22,1.56) {$x_2$};

\node at (5-1.10,1.05) {$\Delta$};
\node at (5+1.10,1.05) {$\widetilde{\Delta}$};

\node[right=3pt] at (5,-0.76) {$\Delta_\phi$};

\draw[
    latex-latex,
    line width=3.0pt,
    cyan!55!blue
] (2.0,0.04) -- (3.5,0.04);

\draw[propagator] (0,-0.76) -- (0,0.76);

\draw[block] (0,0.74) -- (-1.0,1.64);
\draw[block] (0,0.74) -- (1.0,1.64);

\draw[block] (0,-0.74) -- (-1.0,-1.64);
\draw[block] (0,-0.74) -- (1.0,-1.64);

\fill (-1.0,1.64) circle (1.2pt);
\fill (1.0,1.64) circle (1.2pt);
\fill (-1.0,-1.64) circle (1.2pt);
\fill (1.0,-1.64) circle (1.2pt);

\node[left=2pt]  at (-1.0,1.56) {$x_1$};
\node[right=2pt] at (1.0,1.56) {$\infty$};

\node[left=2pt]  at (-1.0,-1.56) {$x_2$};
\node[right=2pt] at (1.0,-1.56) {$0$};

\node at (-1.0,1.13) {$\Delta$};
\node at (1.0,1.13) {$\widetilde{\Delta}_\phi$};

\node at (-1.0,-1.13) {$\Delta_\phi$};
\node at (1.0,-1.13) {$\Delta$};

\node[right=3pt] at (0,0) {$\Delta$};

\end{tikzpicture}
\caption{Correspondence between the four-point CPW in the $t$-channel for the operator configuration \eqref{conformal_dimensions}--\eqref{diag1} and the two-point defect CPW in the bulk channel for the operator configuration \eqref{conformal_dimensions_defect}--\eqref{diag2} (a point-like defect is represented by the red star).}
\label{fig:defect-flat}
\end{figure}

\paragraph{Thermal Casimir equations.} Thermal conformal blocks do not inherit many of the properties of their flat-space counterparts that are essential for the standard bootstrap analysis. This reflects the reduced conformal symmetry of the thermal background, becoming particularly evident at the level of the Casimir equations. In flat space, conformal blocks are fixed as the contributions of individual irreducible representations of the conformal group $o(d+1,1)$ to conformal correlation functions. Consequently, the Casimir equations arise directly from the action of the Casimir operators on these representations \cite{Dolan:2003hv,Dolan:2011dv}.


At finite temperature, the thermal conformal block still encodes the contribution of an irreducible representation to a correlation function. However, the thermal correlator is no longer a vacuum expectation value, but rather a trace over the full space of states. It prevents one from deriving the corresponding differential equations directly from the action of the Casimir operator. To overcome this difficulty, one can introduce chemical potentials, which serve to track the conserved charges within the trace and restore a structure that allows one to formulate the corresponding Casimir equations \cite{Gobeil:2018fzy}.\footnote{This also applies to large-$c$ torus CFT$_2$ and the respective global conformal blocks, where the complex-valued modular parameter $q$ simultaneously  encodes both the temperature and  the chemical potential, see \cite{Kraus:2017ezw,Alkalaev:2018qaz,Alkalaev:2022kal}.} This extended framework has recently enabled practical recursive methods for computing the coefficients of thermal conformal block in the presence of chemical potentials for both scalar and spinning operators \cite{Buric:2024kxo, Buric:2025uqt}.

The existence of two distinct methods for relating thermal CPWs to other CFT configurations provides an immediate advantage. Namely, we show that the thermal Casimir equations  can be obtained by taking   the diagonal limit of the ordinary flat-space Casimir system evaluated for the specific operator configuration \eqref{conformal_dimensions}. Remarkably, the corresponding reduction procedure is already well established in the literature  \cite{Hogervorst:2013kva}.

Alternatively, since the derivation of the Casimir equations in defect CFT proceeds in close analogy with the flat-space case \cite{Billo:2016cpy, Gadde:2016fbj}, this correspondence could, in principle, lay the groundwork for obtaining the Casimir equations for spinning thermal blocks as diagonal limits of the corresponding Casimir systems in the presence of defects.



\paragraph{Organization of the paper.} In section \bref{sec:flat} we introduce  the shadow formalism, define  four-point scalar \fcpws in the $t$-channel, examine the underlying  conformal integrals, and discuss their diagonal limit. Section \bref{sec:thermal} is devoted to the construction of \tcpws  and the derivation of the scalar thermal conformal block. We demonstrate  how the one-point \tcpw can be related to the four-point \fcpw by considering  a specific operator  configuration.  In section \bref{sec:defect_scalar} we reformulate the construction within the defect CFT setup and clarify the relation between thermal and defect conformal blocks. The spinning extension is considered in section \bref{sec:spin}. In section \bref{sec:casimir} we derive the thermal Casimir equations from the diagonal reduction of the flat-space Casimir system. We conclude in section \bref{sec:conclusion} with a  summary of  our results and discuss  future perspectives. Appendix \bref{app:functions} collects necessary information on hypergeometric  functions and presents the derivation of the reduction formula, which relates the Appell $F_4$ function to the ${}_3 F_2$ hypergeometric function.

\section{Flat-space conformal partial waves}
\label{sec:flat}

The shadow formalism provides a convenient method for computing conformal blocks in $d$-dimensional CFT \cite{Ferrara:1972uq, Ferrara:1972ay, Ferrara:1972xe, Ferrara:1972kab, Dobrev:1977qv, Fradkin:1978pp, Fradkin:1997df, Dolan:2011dv, Simmons-Duffin:2012juh}.  Its main advantage lies in representing CPWs as integrals over products of three-point functions, thereby avoiding a direct summation over descendant states. The conformal block can then be extracted from a given CPW by isolating the physical contribution with dimension $\Delta$ and discarding the unphysical shadow contribution with dimension $\widetilde{\Delta} = d-\Delta$. This framework starts by introducing the conformally invariant operator
\be
\label{projector_def}
\Pi_{\Delta}
=
\int_{\mathbb{R}^{d}} {\rm d}^d x_0\,
\mathcal{O}(x_0)\ket{0}\bra{0}\widetilde{\mathcal{O}}(x_0)\,,
\ee
which projects onto the conformal families of an (intermediate) operator $\mathcal{O}$ and its shadow $\widetilde{\mathcal{O}}$ with dual conformal dimensions  $\Delta$ and $\widetilde{\Delta}$, respectively. These operators are related through an integral transformation \cite{Simmons-Duffin:2012juh, Karateev:2017jgd, Karateev:2018oml}, the normalization can be chosen such that   \eqref{projector_def} is idempotent: $\Pi_{\Delta}\Pi_{\Delta'}=\delta_{\Delta\Delta'}\,\Pi_{\Delta}$.

Let us consider the  four-point correlation function of scalar primary operators $\phi_{\Delta_i}(x_i)$, with $x_i\in\mathbb{R}^d$. Inserting the projector \eqref{projector_def} between pairs of these  operators introduces a ${\tt f}$CPW in the corresponding  exchange channels. In what follows, we consider the $t$-channel
\be
\label{CPW_projector}
\big\langle \phi_{\Delta_4}(x_4)\phi_{\Delta_1}(x_1)\,\Pi_{\Delta}\,\phi_{\Delta_2}(x_2)\phi_{\Delta_3}(x_3)\big\rangle
=
C_{\Delta_4\Delta_1\Delta}\,
C_{\widetilde{\Delta}\Delta_3\Delta_2}\,
{}^{(t)}\Psi_{\Delta}^{\Delta_1,...,\Delta_4}(\bx)\,,
\ee
where $\bx =\{x_1,...,x_4\}$ and  $C_{\Delta_a\Delta_b\Delta_c}$ are the OPE coefficients.\footnote{\label{fn:shadow_ratio} Note that the relation between $\mathcal{O}$ and  $\widetilde{\mathcal{O}}$ induces a corresponding relation between their structure constants, expressed in terms of $\Gamma$-functions of the conformal dimensions, see e.g. \cite{Karateev:2018oml,Alkalaev:2024jxh}. Here, the associated prefactor is not included in the definition of the ${\tt f}$CPW.}
From \eqref{projector_def} one finds that the $t$-channel \fcpw  admits the integral representation:
\be
\label{CPW_integral_V}
\tPsi
=
\int_{\mathbb{R}^d} {\rm d}^d x_0\,
V_{\Delta_4\Delta_1\Delta}(x_4,x_1,x_0)\,
V_{\widetilde{\Delta}\Delta_2\Delta_3}(x_0,x_2,x_3)\,,
\ee
where the kinematical part of the scalar three-point function is given by
\be
\label{corr_V}
V_{\Delta_i\Delta_j\Delta_k}(x_i,x_j,x_k)
=
X_{ij}^{\frac{\Delta_k-\Delta_i-\Delta_j}{2}}
X_{ik}^{\frac{\Delta_j-\Delta_i-\Delta_k}{2}}
X_{jk}^{\frac{\Delta_i-\Delta_j-\Delta_k}{2}}\,,
\quad
X_{ij} = (x_i - x_j)^2\,.
\ee
Substituting \eqref{corr_V} into \eqref{CPW_integral_V} one obtains
\be
\label{CPW_integral}
\tPsi=
X_{14}^{\frac{\Delta-\Delta_1-\Delta_4}{2}}\,
X_{23}^{\frac{\widetilde{\Delta}-\Delta_2-\Delta_3}{2}}
\int_{\mathbb{R}^d} {\rm d}^d x_0\,
{X_{01}^{-b_1}X_{02}^{-b_2}X_{03}^{-b_3}X_{04}^{-b_4}}\,,
\ee
where the parameters are given by
\be
\ba{c}
\label{CPW_parameters}
\dps
b_1=\frac{\Delta+\Delta_{14}}{2}\,,
\qquad
b_2=\frac{\widetilde{\Delta}+\Delta_{23}}{2}\,,
\qquad
b_3=\frac{\widetilde{\Delta}-\Delta_{23}}{2}\,,
\qquad
b_4=\frac{\Delta-\Delta_{14}}{2}\,,
\\[12pt]
\dps
b_1+b_2+b_3+b_4=d\,,
\ea
\ee
with $\Delta_{ij}=\Delta_i-\Delta_j$.

\subsection{Conformal integrals}
\label{sec:conf_int}

CPWs are  naturally represented in terms of the  four-point conformal integral. This parametric integral\footnote{For a retrospective discussion of  conformal integrals, see \cite{Dolan:2000uw}.}
\be
\label{box_def}
I_4^{a_1,a_2,a_3,a_4}(\bx)
=
\int_{\mathbb{R}^d} \frac{{\rm d}^d x_0}{\pi^{\frac{d}{2}}}
\prod_{j=1}^4 X_{0j}^{-a_j},
\qquad
\sum_{j=1}^4 a_j=d,
\ee
admits a decomposition into a sum of four {\it basis} functions, which are conveniently enumerated by ordered  triples of indices \cite{Alkalaev:2025fgn, Alkalaev:2025zhg}:
\be
\label{box_sum}
I_4^{\bm a}(\bx)
=
\Phi_4^{\langle 234\rangle}(\bm a|\bx)
+\Phi_4^{\langle 134\rangle}(\bm a|\bx)
+\Phi_4^{\langle 124\rangle}(\bm a|\bx)
+\Phi_4^{\langle 123\rangle}(\bm a|\bx)\,,
\ee
where $\bm a = \{a_1,...,a_4\}$.
These functions are related by the action of the cyclic group $\mathbb Z_4=\{e,C_4,(C_4)^2,(C_4)^3\}$:
\be
\label{box_basis_1}
\ba{l}
\dps
\Phi_4^{\langle 234\rangle}(\bm a|\bx)=(C_4)^1\circ\Phi_4^{\langle 123\rangle}(\bm a|\bx)\,,
\\[10pt]
\dps
\Phi_4^{\langle 134\rangle}(\bm a|\bx)=(C_4)^2\circ\Phi_4^{\langle 123\rangle}(\bm a|\bx)\,,
\\[10pt]
\dps
\Phi_4^{\langle 124\rangle}(\bm a|\bx)=(C_4)^3\circ\Phi_4^{\langle 123\rangle}(\bm a|\bx)\,,
\ea
\ee
where the cyclic permutation $C_4=(1234)$ acts  on $\bm a$ and $\bx$ as $(a_i, x_i) \to (a_{i+1}, x_{i+1})$. As a consequence, the full conformal integral is generated from a single basis function represented in terms of the fourth Appell function  \eqref{F4} as
\be
\label{box_basis_2}
{
\Phi_4^{\langle 123\rangle}(\bm a|\bx)
=
\frac{X_{12}^{a_{34}}
X_{23}^{a_{14}}}{X_{24}^{a_4}X_{13}^{\frac{d}{2}-a_2}}\;\,
\Gamma\left[
\begin{array}{c}
a_{12},\ a_{23},\ \frac{d}{2}-a_2
\\[1mm]
a_1,\ a_2,\ a_3
\end{array}
\right]
F_4
\Bigg[
\begin{array}{c}
\frac{d}{2}-a_2,\ a_4
\\[1mm]
1-a_{12},\ 1-a_{23}
\end{array}
\Bigg|\,u,v
\Bigg],
}
\ee
where the $\Gamma$-prefactor is defined in \eqref{gammas} and $a_{ij}=a_i+a_j-\frac{d}{2}$. The arguments $u$ and $v$ are the cross-ratios:
\be
\label{uv}
u=\frac{X_{12}X_{34}}{X_{13}X_{24}}\,,
\qquad
v=\frac{X_{14}X_{23}}{X_{13}X_{24}}\,.
\ee
Under the cyclic permutation, they are interchanged as
$C_4\circ u=v$ and  $C_4\circ v=u$.

\subsection{Conformal blocks}
\label{sec:conf_block}
The four basis functions naturally organize into pairs according to their asymptotic behavior in a given OPE limit. Consequently, after substituting \eqref{box_sum} into \eqref{CPW_integral}, the \fcpw decomposes into a conformal block and a shadow block, each expressed as a sum of two basis functions.

More precisely, the OPE fixes the leading behavior of a conformal block in a given channel as two points approach each other. This allows one to identify which pair of basis functions contributes to the conformal block and which contributes to its shadow. From this perspective, the decomposition into conformal and shadow blocks is uniquely determined by the asymptotic properties of the basis functions: the four basis functions can be grouped into pairs in three different ways  corresponding to the three OPE channels.

We consider the $t$-channel, which is defined by the OPE in the pairs $\phi_2\phi_3$ and $\phi_1\phi_4$. The basis functions  are then  split  according to their asymptotics as $X_{23}\to 0$. Following this prescription, one obtains
\be
\label{CPW_sum}
\tPsi
=
K_{\widetilde{\Delta}}^{\Delta_3\Delta_2}\,\tG
+
K_{\Delta}^{\Delta_1\Delta_4}\,\tGs\,,
\ee
where
\be
\label{K_coefficient}
K_{\Delta}^{\Delta_i\Delta_j}
=
\pi^{\frac{d}{2}}
\Gamma\left[
\begin{array}{c}
\Delta-\frac{d}{2},\
\frac{\widetilde{\Delta}-\Delta_{ij}}{2},\
\frac{\widetilde{\Delta}+\Delta_{ij}}{2}
\\[1mm]
\widetilde{\Delta},\
\frac{\Delta-\Delta_{ij}}{2},\
\frac{\Delta+\Delta_{ij}}{2}
\end{array}
\right]
,
\qquad
\Delta_{ij}=\Delta_i-\Delta_j\,.
\ee
The $t$-channel conformal block and its shadow are  given by
\be
\label{block_sum}
\tG =
\Big(K_{\widetilde{\Delta}}^{\Delta_3\Delta_2}\Big)^{-1}\,X_{23}^{\frac{\widetilde{\Delta}-\Delta_2-\Delta_3}{2}}
X_{14}^{\frac{\Delta-\Delta_1-\Delta_4}{2}}
\Big(
\Phi_4^{\langle 234\rangle}(\bm b|\bx)
+
\Phi_4^{\langle 123\rangle}(\bm b|\bx)
\Big),
\ee
\be
\label{block_sum_s}
\tGs =
\Big(K_{\Delta}^{\Delta_1\Delta_4}\Big)^{-1}\,X_{23}^{\frac{\widetilde{\Delta}-\Delta_2-\Delta_3}{2}}
X_{14}^{\frac{\Delta-\Delta_1-\Delta_4}{2}}
\Big(
\Phi_4^{\langle 134\rangle}(\bm b|\bx)
+
\Phi_4^{\langle 124\rangle}(\bm b|\bx)
\Big),
\ee
where the parameters $\bm b = \{b_1,...,b_4\}$ are defined in \eqref{CPW_parameters}. The shadow block \eqref{block_sum_s} can be obtained from the conformal block \eqref{block_sum} via the substitution $\Delta\to\widetilde{\Delta}$. These relations manifest the splitting of the conformal integral into pairs of basis functions in the corresponding exchange channel. It is convenient to factor out the conformally covariant prefactor
\be
\label{block}
\tG
=
\frac{1}{X_{23}^{\frac{\Delta_2+\Delta_3}{2}}X_{14}^{\frac{\Delta_1+\Delta_4}{2}}}
\left(\frac{X_{24}}{X_{34}}\right)^{\frac{\Delta_{32}}{2}}
\left(\frac{X_{34}}{X_{13}}\right)^{\frac{\Delta_{14}}{2}}
\,{}^{(t)}g_{\Delta}^{\Delta_1,...,\Delta_4}(u,v)\,,
\ee
thereby  defining the dimensionless $t$-channel conformal block
\be
\label{block_bare}
\begin{split}
{}^{(t)}g_{\Delta}^{\Delta_1,...,\Delta_4}(u,v)
=
v^{\frac{\Delta}{2}}\,
\Gamma\left[
\begin{array}{c}
\frac{\Delta_{32}-\Delta_{14}}{2},\ \Delta
\\[1mm]
\frac{\Delta+\Delta_{32}}{2},\ \frac{\Delta-\Delta_{14}}{2}
\end{array}
\right]
&
F_4
\left[
\begin{array}{c}
\frac{\Delta-\Delta_{32}}{2},\ \frac{\Delta+\Delta_{14}}{2}
\\[1mm]
1+\frac{\Delta_{14}-\Delta_{32}}{2},\ 1-\frac{d}{2}+\Delta
\end{array}
\Bigg|\,u,v
\right]
\\[3mm]
+
v^{\frac{\Delta}{2}}u^{\frac{\Delta_{32}-\Delta_{14}}{2}}\,
\Gamma\left[
\begin{array}{c}
\frac{\Delta_{14}-\Delta_{32}}{2},\ \Delta
\\[1mm]
\frac{\Delta-\Delta_{32}}{2},\ \frac{\Delta+\Delta_{14}}{2}
\end{array}
\right]
&
F_4
\left[
\begin{array}{c}
\frac{\Delta+\Delta_{32}}{2},\ \frac{\Delta-\Delta_{14}}{2}
\\[1mm]
1+\frac{\Delta_{32}-\Delta_{14}}{2},\ 1-\frac{d}{2}+\Delta
\end{array}
\Bigg|\,u,v
\right].
\end{split}
\ee

In the $t$-channel  OPE limit $X_{23}\to 0$, the cross-ratios behave as $v\to 0$ and $u\to 1$, which lies outside the convergence domain $\sqrt{|u|}+\sqrt{|v|}<1$ of the Appell $F_4$ series. While this typically requires an appropriate analytic continuation (as discussed, for instance, in \cite{Dolan:2000ut}), for our purposes it is more convenient to retain the $(u,v)$-symmetric representation \eqref{block_bare}. This form serves as the natural starting point for taking  the diagonal limit.

\paragraph{Diagonal limit.}
Consider the parametrization of the cross-ratios $u$ and $v$ \cite{Dolan:2000ut}:
\be
  \label{uv_q}
  u = (1-q)(1- \bar q)\,,
  \qquad
  v = q \bar q\,,
\ee
where $q$ and $\bar{q}$ are complex conjugates.  The parametrization is chosen such that $q\to0$ in the $t$-channel OPE limit. The diagonal limit is defined by restricting to the configuration of operators in $\RR^d$ for which\footnote{The diagonal limit is a practical tool in the numerical conformal bootstrap, see e.g. \cite{El-Showk:2012cjh, Hogervorst:2013kva}.}
\be
  \label{diagonal_defenition}
   q = \bar q\,.
\ee
In the $(u,v)$-plane, this condition corresponds to the boundary of the convergence domain given by  $u = (1-\sqrt{v})^2$. In this regime, conformal blocks become functions of a single real variable $q$. Further simplifications arise when additional relations between the conformal dimensions are imposed. For instance, by equating the external dimensions pairwise, $\Delta_2 = \Delta_3$ or $\Delta_1 = \Delta_4$, the two-term expression \eqref{block_bare} for the four-point $t$-channel conformal block reduces from Appell functions $F_4$ to a  single ${}_3 F_2$ hypergeometric function. The corresponding reduction formula \eqref{F4_diagonal}, derived in the present work, renders the analytic structure of the diagonal blocks manifestly transparent. Importantly, the diagonal limit will also play a key role in our consideration of thermal and defect conformal blocks.

\section{Thermal conformal partial waves}
\label{sec:thermal}

Let us consider the scalar exchange in the thermal one-point function \eqref{th_corr_def}. Inserting the shadow projector \eqref{projector_def} into  \eqref{th_corr_def} we define the \tcpw  as
\be
\label{tCPW_trace}
\Tr_{\mathcal H}\Big[\Pi_{\Delta}\,\phi(x)\,q^D\Big]
=
C_{\widetilde{\Delta}\Delta_\phi\Delta}\,
\Upsilon_{\Delta}^{\Delta_\phi}(q,x)\,,
\ee
where the right-hand side admits an integral representation  \cite{Gobeil:2018fzy, Alkalaev:2023evp,Alkalaev:2024jxh}:
\be
\Upsilon_{\Delta}^{\Delta_\phi}(q,x)
=
q^\Delta
\int_{\mathbb{R}^d} {\rm d}^d x_0\,
V_{\widetilde{\Delta}\Delta_\phi\Delta}(x_0,x,qx_0)\,.
\ee
Substituting \eqref{corr_V} and rescaling the integration variable as $x_0\to x_0/q$ we  obtain the following expression,
\be
\label{tCPW_explicit}
\Upsilon_{\Delta}^{\Delta_\phi}(q,x)
=
\frac{q^{\widetilde{\Delta}}}{(1-q)^{\widetilde{\Delta}_\phi}}
\int_{\mathbb{R}^d} {\rm d}^d x_0\;
{X_{01}^{-c_1}X_{02}^{-c_2}(x_0^2)^{-c_3}}\,.
\ee
Here, we have effectively introduced two points $x_1=x$ and $x_2=qx$, and defined
\be
\label{t2_parameters}
\ba{c}
\dps
c_1=\frac{\Delta_\phi+\Delta-\widetilde{\Delta}}{2}\,,
\qquad
c_2=\frac{\widetilde{\Delta}+\Delta_\phi-\Delta}{2}\,,
\qquad
c_3=\frac{\widetilde{\Delta}_\phi}{2}\,,
\\[12pt]
c_1+c_2+2c_3=d\,,
\ea
\ee
where the linear constraint ensures the required  scaling properties of \eqref{tCPW_explicit}.

\subsection{Thermal conformal integrals}

Thermal conformal partial waves are expressed through the parametric thermal conformal integral \cite{Alkalaev:2024jxh}:
\be
\label{t2}
T_2^{a_1,a_2,a_3}(x_1,x_2)
=
\int_{\mathbb{R}^d} \frac{{\rm d}^d x_0}{\pi^{\frac{d}{2}}}
\,X_{01}^{-a_1}X_{02}^{-a_2}(x_0^2)^{-a_3}\,.
\ee
Here, the  coordinates $x_1, x_2$ are  independent and the parameters $a_1, a_2, a_3$ are unconstrained.\footnote{In \cite{Alkalaev:2024jxh} we assumed  the constraint   $a_1+a_2+2a_3 = d$ following from \eqref{t2_parameters}, in the present  definition this constraint is relaxed.}
This integral is also known in the Feynman-integral literature as the triangle vertex integral. It has been explicitly  evaluated in \cite{Boos:1987bg, Boos:1990rg}.

Thermal conformal integrals are not independent, but rather emerge as a specific limit of the conformal integrals governing flat-space kinematics.  To make this relation explicit, we compare \eqref{t2} with \eqref{box_def} and find \cite{Alkalaev:2024jxh}:
\be
\label{t2_to_box}
T_2^{a_1,a_2,a_3}(x_1,x_2)
=
\lim_{\begin{subarray}{l}
\;x_3\to 0 \\ \;x_4\to\infty \end{subarray}}
\left(
x_4^{2a_4}\,
I_4^{a_1,a_2,a_3,a_4}(x_1,x_2,x_3,x_4)
\right)\Big|_{a_4=d-a_1-a_2-a_3}.
\ee
This perspective  allows us to leverage known flat-space results in the thermal setting. In particular, the thermal integral $T_2$ inherits the four-term decomposition \eqref{box_sum}:
\be
\label{t2_sum}
T_2^{a_1,a_2,a_3}(x_1,x_2)
=
\mathcal T_2^{\langle 234\rangle}(\bm a|\bx)
+\mathcal T_2^{\langle 134\rangle}(\bm a|\bx)
+\mathcal T_2^{\langle 124\rangle}(\bm a|\bx)
+\mathcal T_2^{\langle 123\rangle}(\bm a|\bx)\,,
\ee
where
\be
\mathcal T_2^{\langle ijk\rangle}(\bm a|\bx)
\equiv
\lim_{\begin{subarray}{l}
\;x_3\to 0 \\ \;x_4\to\infty \end{subarray}}
\left(
x_4^{2a_4}\,
\Phi_4^{\langle ijk\rangle}(\bm a|\bx)
\right)\Big|_{a_4=d-a_1-a_2-a_3}\,.
\ee
Contrary to the flat-space case \eqref{box_basis_1}, taking the limit violates  the cyclic symmetry already in the defining relation \eqref{t2_to_box}, so the functions $\mathcal T_2^{\langle ijk\rangle}$ are no longer related by cyclic permutations. We conclude that the thermal conformal integral is expressed as a sum of four hypergeometric functions of two variables, thereby matching the results of \cite{Boos:1987bg, Boos:1990rg}.

We now consider the four-point conformal integral \eqref{t2_to_box} in the diagonal configuration \eqref{diag1}. If the parameters are additionally constrained by  $a_3 = a_4$, or equivalently $a_1+a_2+2a_3 = d$, the expression \eqref{t2_sum} is considerably simplified. In particular, all the Appell functions $F_4$ reduce to the generalized hypergeometric functions ${}_3F_2$ by virtue of  the reduction formula \eqref{F4_diagonal}. Under these conditions, the four terms in \eqref{t2_sum} combine into pairs:
\be
T_2^{a_1,a_2,a_3}(x_1,q x_1)\Big|_{a_1+a_2+2a_3=d}
=
\mathcal T_{+}^{\bm a}(q,x_1)+\mathcal T_{-}^{\bm a}(q,x_1)\,,
\ee
where
\be
\label{t2_B_S}
\ba{c}
\dps
\mathcal T_{+}^{\bm a}(q,x_1)
=
\mathcal T_2^{\langle 234\rangle}(\bm a|\bx)+\mathcal T_2^{\langle 123\rangle}(\bm a|\bx)\,,
\\[15pt]
\dps
\mathcal T_{-}^{\bm a}(q,x_1)
=
\mathcal T_2^{\langle 134\rangle}(\bm a|\bx)+\mathcal T_2^{\langle 124\rangle}(\bm a|\bx)\,.
\ea
\ee
These two functions,  expressed through the generalized hypergeometric function ${}_3F_2$, have distinct asymptotic behavior as $q\to 0$. It follows that they will contribute to either conformal $(+)$ or shadow $(-)$ thermal blocks.\footnote{It is worth noting that  in the  limit considered, a discrete symmetry emerges:
\be
\mathcal T_{-}^{\bm a}(q,x_1)
=
\sigma\circ
\mathcal T_{+}^{\bm a}(q,x_1)\,,
\ee
where $\sigma$ is the transposition acting on both coordinates and parameters as $(x_1,q, a_1)\leftrightarrow(x_1,1/q, a_2)$.}

\subsection{From flat-space to thermal conformal blocks}

The \tcpw \eqref{tCPW_explicit}  is expressed in terms of the thermal conformal integral evaluated in the diagonal configuration discussed above. This observation allows for an explicit identification with the \fcpw \eqref{CPW_integral} taken in the same configuration, provided that the parameters of the corresponding conformal integrals are matched as
\be
\label{b=c}
b_1=c_1\,,
\qquad
b_2=c_2\,,
\qquad
b_3=c_3\,,
\qquad
b_4=c_3\,.
\ee
Such a  correspondence  is achieved by the specific choice of conformal dimensions \eqref{conformal_dimensions}.\footnote{The conditions \eqref{b=c} fix only the differences of conformal dimensions in the flat-space $\mathrm{CPW}$. This yields a number of equivalent parameterizations, among which  \eqref{conformal_dimensions} is a convenient choice.} The resulting identification reads
\be
\label{thermal_flat_cpw}
\Upsilon_{\Delta}^{\Delta_\phi}(q,x)
=
|x|^{2 \Delta - \widetilde\Delta_\phi}
\frac{q^{\Delta_\phi+\Delta}}{(1-q)^{\widetilde{\Delta}_\phi}}\,
\lim_{y\to\infty}\left[y^{2\widetilde{\Delta}_\phi}\,\tPsiDL\right].
\ee

Thus, the scalar \tcpw is obtained by considering  the $t$-channel \fcpw in a specific  kinematical regime characterized by a particular arrangement of points and conformal dimensions. According to \eqref{CPW_sum}, the $\textsf{t}\mathrm{CPW}$ decomposes into  conformal and shadow blocks as
\be
\label{tCPW_zero_combination}
\Upsilon_{\Delta}^{\Delta_\phi}(q,x)
=
K_{\widetilde{\Delta}}^{\Delta_\phi\Delta}\,
\mathcal F_{\Delta}^{\Delta_\phi}(q,x)
+
K_{\Delta}^{\widetilde{\Delta}_\phi\Delta}\,
\mathcal F_{\widetilde{\Delta}}^{\Delta_\phi}(q,x)\,,
\ee
where the two terms differing in their low-temperature $q\to 0$ asymptotics stem from \eqref{t2_B_S}. In particular, the thermal conformal  block is given by
\be
\mathcal F_{\Delta}^{\Delta_\phi}(q,x) = \Big(K_{\widetilde{\Delta}}^{\Delta_\phi\Delta}\,\Big)^{-1}\frac{q^{\widetilde{\Delta}}}{(1-q)^{\widetilde{\Delta}_\phi}}\,
  \mathcal T_{+}^{\bm c}(q,x)\,,
\ee
with the parameters $\bm c$  defined  in \eqref{t2_parameters}, cf. \eqref{block_sum}. Finally, the reduction formula \eqref{F4_diagonal} gives the following closed-form expression:
\be
\label{tF}
\mathcal F_{\Delta}^{\Delta_\phi}(q,x)
=
|x|^{-\Delta_\phi}\,
\frac{q^\Delta}{(1-q)^{2\Delta}}\,
{}_3F_2
\left[
\begin{array}{c}
\Delta-\frac{d}{2}+\frac{1}{2},\
\Delta-\frac{\Delta_\phi}{2},\
\Delta-\frac{\widetilde{\Delta}_\phi}{2}
\\[2mm]
\Delta,\
1+2\Delta-d
\end{array}
\Bigg|\,\frac{-4q}{(1-q)^2}
\right].
\ee
This expression satisfies the expected asymptotic behavior \eqref{tF_asymptot}, while the shadow block is generated via the substitution $\Delta\to\widetilde{\Delta}$. Importantly, the result \eqref{tF} agrees with the expression obtained in \cite{Gobeil:2018fzy} from a conjectured AdS integral representation. Since our derivation relies entirely on CFT methods, it provides an independent confirmation of that construction.

The thermal conformal block \eqref{tF} exhibits several notable properties. First, the combination $|x|^{\Delta_\phi}\mathcal F_{\Delta}^{\Delta_\phi}(q,x)$ is independent of the argument $x$ of the external primary operator and is symmetric under the reflection
\be
\label{th_block_symmetry}
\Delta_\phi \longleftrightarrow \widetilde{\Delta}_\phi\,.
\ee
Second, for $\Delta_\phi=0$  the  hypergeometric function ${}_3F_2$  reduces to  ${}_2F_1$, allowing for further simplification via the quadratic transformation \eqref{2F1_quadr_transform_2}:
\be
\label{tF_character_1}
\mathcal F_{\Delta}^{\Delta_\phi=0}(q,x)
=
\frac{q^\Delta}{(1-q)^{2\Delta}}\,
{}_2F_1
\left[
\begin{array}{c}
\Delta-\frac{d}{2}+\frac{1}{2},\
\Delta-\frac{d}{2}
\\[2mm]
1+2\Delta-d
\end{array}
\Bigg|\,\frac{-4q}{(1-q)^2}
\right] =
\frac{q^\Delta}{(1-q)^d}\;,
\ee
Thus, the thermal conformal block yields the character of the conformal algebra $o(d+1,1)$ \cite{Dolan:2005wy}. Finally, for  $d=2$ the expression  \eqref{tF} reproduces the  global torus  block \cite{Hadasz:2009db}.

The discussion above clarifies the basic mechanism of our construction. \tcpws are formulated  within the shadow formalism and subsequently identified with a special configuration of  \fcpws in the  diagonal limit. In this way, the thermal conformal block is recovered from the $t$-channel conformal block \eqref{block_bare} and satisfies the asymptotic condition \eqref{tF_asymptot}.

\section{Defect conformal partial waves}
\label{sec:defect_scalar}

In this section, we establish a relation between \tcpws and \dcpws, in close analogy with the flat-space analysis discussed above. To this end, we adapt  the shadow formalism to the presence of defects (see also \cite{Fukuda:2017cup}).

Let us  consider a flat $p$-dimensional defect embedded in $\mathbb R^d$.\footnote{The techniques for constructing conformal blocks in  \dcft are thoroughly developed \cite{McAvity:1995zd, DeWolfe:2001pq, Liendo:2012hy, Billo:2016cpy, Gadde:2016fbj, Rastelli:2017ecj, Lauria:2017wav, Mazac:2018biw, Lauria:2018klo, Herzog:2020bqw, Kobayashi:2018okw, Liendo:2019jpu, Isachenkov:2018pef, Buric:2020zea}. For an introduction to \dcft\hspace{-1mm}, see e.g. \cite{Herzog:2021_defects_notes}.} Its presence breaks the conformal symmetry algebra $o(d+1,1)$ down to $o(p+1,1)\oplus o(d-p)$, which corresponds to conformal transformations along the defect and rotations in the transverse directions. Accordingly, local operators are partitioned into two classes. Defect operators are supported on the defect and transform in representations of the conformal algebra $o(p+1,1)$, thereby defining an ordinary CFT there. Bulk operators, inserted away from the defect, probe the coupling between the bulk and defect degrees of freedom, giving  rise to a richer set of observables. In what follows, we focus on bulk operators.

A characteristic feature of \dcfts is that one-point functions of bulk operators do not necessarily vanish. Instead, the broken conformal symmetry $o(p+1,1)\oplus o(d-p)$ fixes their form up to a model-dependent constant. For a scalar primary operator $\phi(x)$ of conformal dimension $\Delta_\phi$, the one-point function is given by \cite{McAvity:1995zd, DeWolfe:2001pq, Liendo:2012hy}:
\be
\big\langle \phi(x) \big\rangle_{\text{defect}}
\,=\,
\frac{a_{\Delta_\phi}}{|x_\perp|^{\Delta_\phi}\!\!}\;\,,
\ee
where $|x_\perp|$ denotes the transverse distance to the defect. The expectation value  $\langle \dots \rangle_{\text{defect}}$ is defined as $\langle \cD | \cR(\dots)\rvac$, where $\rvac$ is the $o(d+1,1)$ invariant vacuum, while $\langle \cD |$ is a defect state invariant under the broken conformal symmetry; $\cR$ denotes the radial ordering. The state $\langle \cD |$ encodes both the dynamical information and the boundary conditions imposed by the defect.

We now consider the two-point function of bulk scalar operators. Similar to the four-point functions in $\RR^d$, this correlator admits different OPE channels\cite{Billo:2016cpy, Gadde:2016fbj}. The bulk-channel expansion can be  obtained by performing the OPE between $\phi_{\Delta_1}$ and $\phi_{\Delta_2}$ over intermediate operators $\phi_{\Delta_k}$, leading  to
\be
\label{defect_2pt}
\big\langle \phi_{\Delta_1}(x_1)\phi_{\Delta_2}(x_2)\big\rangle_{\text{defect}}
=
\sum_{\Delta_k}
C_{\Delta_1\Delta_2\Delta_k}\,
a_{\Delta_k}\,
f_{\Delta_k}^{\Delta_1\Delta_2}(x_1,x_2)
+\text{spinning contributions} \,,
\ee
where $C_{\Delta_1\Delta_2\Delta_k}$ are the standard OPE coefficients, and $a_{\Delta_k}$ are the one-point coefficients. The functions $f_{\Delta_k}^{\Delta_1\Delta_2}(x_1,x_2)$ are the bulk-channel conformal blocks, summing up the contributions of the entire conformal family of the intermediate operator $\phi _{\Delta _{k}}$ to the correlator. The bulk conformal block $f_{\Delta_k}^{\Delta_1\Delta_2}$ obeys the asymptotic  condition
\be
\label{defect_block_asymptot}
f_{\Delta_k}^{\Delta_1\Delta_2}(x_1,x_2)
\,\sim\,
X_{12}^{\frac{\Delta_k-\Delta_1-\Delta_2}{2}}\,
|x_{1,\perp}|^{-\Delta_k}\,,
\ee
dictated by the non-vanishing one-point function of the exchanged primary operator in the OPE regime $X_{12}\to 0$.

\subsection{Shadow representation}

In order to construct a \dcpw\hspace{-1mm}, we insert the shadow projector \eqref{projector_def} into the two-point function \eqref{defect_2pt}:\footnote{The relation between an operator of conformal dimension $\Delta_k$ and its shadow of dimension $\widetilde{\Delta}_k$ induces a corresponding relation between the one-point coefficients $a_{\Delta_k}$ and $a_{\widetilde{\Delta}_k}$. We do not include  their ratio into the definition of the \dcpw\hspace{-1mm}, cf. footnote \bref{fn:shadow_ratio}.}
\be
\label{defect_projector}
\big\langle \Pi_{\Delta_k} \phi_1(x_1)\phi_2(x_2)\big\rangle_{\text{defect}}
=
C_{\Delta_1\Delta_2\Delta_k}\,
a_{\widetilde{\Delta}_k}\,
\Theta_{\Delta_k}^{\Delta_1\Delta_2}(x_1,x_2)\,,
\ee
where
\be
\label{defect_CPW_V}
\Theta_{\Delta_k}^{\Delta_1\Delta_2}(x_1,x_2)
=
\int_{\mathbb R^d} {\rm d}^d x_0\;
V_{\Delta_1\Delta_2\Delta_k}(x_1,x_2,x_0)\,
\frac{1}{|x_{0,\perp}|^{\widetilde{\Delta}_k}\!\!}\;\,.
\ee
Substituting \eqref{corr_V} yields  the following integral representation:
\be
\label{defect_CPW_integral}
\Theta_{\Delta_k}^{\Delta_1\Delta_2}(x_1,x_2)
=
X_{12}^{\frac{\Delta_k-\Delta_1-\Delta_2}{2}}
\int_{\mathbb R^d} {\rm d}^d x_0\;
{X_{01}^{-d_1}X_{02}^{-d_2}(x_{0,\perp}^2)^{-d_3}
}\,,
\ee
where
\be
\ba{c}
\label{defect_parameters}
\dps
d_1=\frac{\Delta_k+\Delta_{12}}{2}\,,
\qquad
d_2=\frac{\Delta_k-\Delta_{12}}{2}\,,
\qquad
d_3=\frac{d-\Delta_k}{2}\,,
\\[12pt]
\dps
d_1+d_2+2d_3=d\,.
\ea
\ee
The \dcpw decomposes into conformal and shadow blocks, uniquely determined by their asymptotics as $X_{12}\to 0$. These asymptotics  are given by \eqref{defect_block_asymptot} for the conformal dimensions of the intermediate channel $\Delta_k$ and $\widetilde \Delta_k $, respectively.

Formula \eqref{defect_CPW_integral} provides a generalization of both the \fcpw \eqref{CPW_integral} and the \tcpw \eqref{tCPW_explicit} to the case of \dcft. In particular, the corresponding $\textsf{d}\mathrm{CPW}$ is given by a special type of integral, whose integrand is a product of power-law factors, one of which depends solely  on the transverse components of the integration point $x_{0,\perp}$. It is therefore of interest to relate this integral directly to the four-point conformal integral \eqref{box_def}, in close analogy with the relation \eqref{t2_to_box}.

\subsection{From defect to thermal conformal blocks:  scalar case}

In the special case of a point-like defect ($p=0$), we have $x_{0,\perp}=x_0$, so that the integral in \eqref{defect_CPW_integral} reduces exactly to the thermal conformal integral \eqref{t2}. The broken  conformal  algebra  then becomes   $o(1,1)\oplus o(d)$,   coinciding  with the symmetry preserved by thermal one-point functions, cf. \eqref{th_corr_def}. These observations allow for establishing  the relation between defect and thermal CPWs in two steps. First, the two  external points in the \dcpw are restricted to the diagonal configuration \eqref{diag2}, controlled by a real parameter $q$, which is naturally interpreted as the thermal one. Second, the parameters of the integrals \eqref{t2_parameters} and \eqref{defect_parameters} must be identified as
\be
\label{c=d}
c_1 = d_1\,,
\qquad
c_2 = d_2\,,
\qquad
c_3 = d_3\,.
\ee
This identification is achieved by choosing the conformal dimensions in \dcft as in \eqref{conformal_dimensions_defect}. Notably, this choice effectively swaps the external and exchanged operators between the two setups: the exchanged operator in \tcft becomes external in \dcft and vice versa. Under these conditions, the \dcpw reduces to the \tcpw as
\be
\label{thermal_defect_scalar}
\Upsilon_{\Delta}^{\Delta_\phi}(q,x)
=
|x|^{\,\widetilde{\Delta}_\phi}\,q^{\widetilde{\Delta}}\,
\Theta_{\Delta_\phi}^{\Delta\widetilde{\Delta}}(x,qx)\Big|_{p=0}\,.
\ee

The correspondence formula  \eqref{thermal_defect_scalar} does not extend straightforwardly to individual conformal blocks. This stems from the specific  assignment of conformal dimensions used above, which effectively interchanges the roles of external and exchanged operators.  In this sense, the thermal block emerges from a reshuffling of defect data rather than as the direct image of a single conformal block. Moreover, the bulk conformal block is identified by the bulk OPE limit \eqref{defect_block_asymptot},  corresponding to the high-temperature limit $q\to 1$ in the diagonal configuration. By contrast, the thermal conformal block is defined by the low-temperature limit $q\to 0$ \eqref{tF_asymptot}. Thus, reconstructing the thermal conformal block within the \dcft setup requires first computing the full \dcpw and analytically continuing it to the region relevant for small $q$.  Only then can the thermal conformal block be isolated via its asymptotic behavior as $q\to 0$. From the \dcft perspective, the thermal conformal block therefore receives contributions from both the bulk conformal block and its shadow counterpart.

Summarizing the previous sections, we have shown that the \tcpw can be obtained from both the \fcpw and the \dcpw by restricting them to special operator configurations. As a byproduct, this analysis also reveals an identification between the \dcpw and the \fcpw\hspace{-1mm}:
\be
\label{defect-flat}
\Theta_{\Delta_\phi}^{\Delta\widetilde{\Delta}}(x_1,x_2)\Big|_{p=0} = |x_1-x_2|^{-\widetilde{\Delta}_\phi} |x_2|^{2\Delta -\widetilde{\Delta}_\phi} \,
\lim_{y\to\infty}\left[y^{2\widetilde{\Delta}_\phi}\,
{}^{(t)}\Psi_{\Delta}^{\Delta,\Delta_\phi, \Delta,\widetilde\Delta_\phi}(x_1, x_2,0,y) \right].
\ee
In contrast to the analogous relations in the thermal setup \eqref{thermal_flat_cpw}, \eqref{thermal_defect_scalar}, this relation does not require imposing the diagonal limit and holds for arbitrary $x_1$ and $x_2$.


\section{Spinning conformal partial waves}
\label{sec:spin}

In what follows, we consider a totally symmetric traceless primary operator of spin $l$, which appears as an external operator in the thermal setup and as an intermediate operator in the defect setup.

\subsection{Defect CPWs with spinning exchange operators}

To study a spinning exchange, one defines the spinning shadow projector \cite{Dolan:2011dv,Simmons-Duffin:2012juh}:
\be
\label{projector_spin}
\Pi_{\Delta_\phi,l}
=
\int_{\mathbb{R}^d} {\rm d}^d x_0\;
\phi_{\mu(l)}(x_0)\ket{0}\bra{0}\widetilde{\phi}^{\mu(l)}(x_0)\,,
\ee
where $\phi_{\mu(l)}(x)\equiv\phi_{\mu_1\cdots\mu_l}(x)$ is a traceless tensor, $\delta^{\mu_1\mu_2}\phi_{\mu_1\mu_2\mu_3...\mu_l}(x)=0$. The shadow operator $\widetilde{\phi}^{\mu(l)}(x)$ has conformal dimension $\widetilde{\Delta}_\phi$ and transforms in the same spin-$l$ representation.

Inserting \eqref{projector_spin} into the two-point function of bulk scalar primaries,
\be
\label{defect_spin_projector}
\big\langle
\Pi_{\Delta_\phi,l}\,
\phi_1(x_1)
\phi_2(x_2)
\big\rangle_{\text{defect}}
=
C_{\Delta_1\Delta_2\Delta_\phi}^{(l)}\,
a_{\widetilde{\Delta}_\phi}^{(l)}\,
\Theta_{\Delta_\phi,l}^{\Delta_1\Delta_2}(x_1,x_2)\,,
\ee
defines the spinning bulk-channel ${\tt d}$CPW, cf. \eqref{defect_projector}. There are two types of tensor structures involved. First, the three-point function of two scalars and one spinning operator \cite{Fradkin:1996is,Osborn:1993cr}:
\be
\label{3pt_scalar_scalar_spin}
\big\langle
\phi_{\Delta_1}(x_1)\,
\phi_{\Delta_2}(x_2)\,
\phi_{\Delta_3}^{\mu(l)}(x_3)
\big\rangle
=
C_{\Delta_1\Delta_2\Delta_3}^{(l)}
\,V_{\Delta_1\Delta_2\Delta_3}(x_1,x_2,x_3)\,
\bigl(\widehat{Z}^{\mu}(x_3|x_1,x_2)\bigr)^l\,,
\ee
where $C_{\Delta_1\Delta_2\Delta_3}^{(l)}$ denotes  the $0-0-l$ OPE coefficients,\footnote{We identify $C_{\Delta_1\Delta_2\Delta_3} \equiv  C_{\Delta_1\Delta_2\Delta_3}^{(0)}$, see \eqref{CPW_projector}.} $V_{\Delta_1\Delta_2\Delta_3}$ is given by \eqref{corr_V}, and the unit vector $\widehat{Z}^\mu$ is defined via
\be
\label{Z_def}
Z^\mu(x_i|x_j,x_k)
=
\frac{(x_i-x_j)^\mu}{X_{ij}}
-
\frac{(x_i-x_k)^\mu}{X_{ik}}\,,
\qquad
\widehat{Z}^\mu = \frac{Z^\mu}{|Z|}\,.
\ee
Second, the one-point function of a bulk spin-$l$ primary operator in the presence of a $p$-dimensional defect is fixed by symmetry to be \cite{Billo:2016cpy}:
\be
\label{defect_one_spin}
\big\langle \phi^{\mu(l)}(x) \big\rangle_{\text{defect}}
=
a_{\Delta_\phi}^{(l)}\,
\frac{(\widehat{x}_\perp^{\mu})^l}{|x_\perp|^{\Delta_\phi}\!}\,,
\qquad
\widehat{x}_\perp^{\mu} = \frac{{x}_\perp^{\mu}}{|{x}_\perp|}\,.
\ee
This expression vanishes for odd spin $l$, reflecting invariance under $x_\perp^\mu \to -x_\perp^\mu$. In \eqref{3pt_scalar_scalar_spin} and  \eqref{defect_one_spin} we use the shorthand notation $(Z^\mu)^l = Z^{\mu_1}\cdots Z^{\mu_l} -{traces}$.

Combining these ingredients we obtain an integral representation of the spinning bulk-channel $\textsf{d}\mathrm{CPW}$:
\be
\label{defect_spin_integral}
\Theta_{\Delta_\phi,l}^{\Delta_1\Delta_2}(x_1,x_2)
=
\int_{\mathbb R^d} {\rm d}^d x_0\;
V_{\Delta_1\Delta_2\Delta_\phi}(x_1,x_2,x_0)
\left( \widehat{Z}_\mu(x_0|x_1,x_2)\right)^l
\frac{(\widehat{x}_{0,\perp}^\mu)^l }{|x_{0,\perp}|^{\widetilde{\Delta}_\phi}\!}\;\,.
\ee

\paragraph{Point-like defect.} We now specialize to the case $p=0$, for which $x_{0,\perp} = x_0$. We recall that the contraction of  two symmetric traceless tensors $A$ and $B$ can be expressed as \cite{Dolan:2000ut, Dolan:2011dv}:
\be
\label{gegenbauer}
(\widehat{A}^\mu)^l (\widehat{B}_{\mu})^l
=
\widehat{C}_l^{(\frac d2-1)}(\widehat{A} \cdot \widehat{B})\,,
\qquad
\widehat{C}_l^{(\epsilon)}(t) = \frac{l!}{2^l (\epsilon)_l}\,
C_l^{(\epsilon)}(t)\,,
\ee
where $C_l^{(\epsilon)}(t)$ is the Gegenbauer polynomial, and the dot $\cdot$ denotes the standard Euclidean inner product. Applying this identity in \eqref{defect_spin_integral} we obtain
\be
\label{defect_spin_cpw_final}
\Theta_{\Delta_\phi,l}^{\Delta_1\Delta_2}(x_1,x_2)
=
X_{12}^{\frac{\Delta_\phi-\Delta_1-\Delta_2}{2}}
\int_{\mathbb{R}^d} {\rm d}^d x_0\;
\frac{
\widehat{C}_l^{(\frac d2 -1)}(t)
}{
X_{01}^{d_1}\,
X_{02}^{d_2}\,
(x_0^2)^{d_3}
}\;,
\ee
where the parameters $d_i$ are the same  as in the scalar case \eqref{defect_parameters} upon setting $\Delta_k=\Delta_\phi$; the argument of the Gegenbauer polynomial is given by
\be
\label{gegenbauer_t}
t
=
\frac{1}{2}
\sqrt{\frac{X_{01}X_{02}}{X_{12} x_0^2}}
\left(
\frac{x_0^2-x_1^2}{X_{01}}
-
\frac{x_0^2-x_2^2}{X_{02}}
\right).
\ee

Formula \eqref{defect_spin_cpw_final} casts  the \dcpw into a form directly analogous to the conformal integral representation of the \fcpw \cite{Dolan:2000ut, Dolan:2011dv}. The entire spin dependence is encoded in the Gegenbauer polynomial, while the remaining part coincides with that of the scalar exchange.

\subsection{From defect to thermal conformal blocks:  spinning case}

Let us consider the conformal block decomposition of the thermal one-point function of a spin-$l$ primary operator, cf. \eqref{th_corr_CB_expansion}:
\be
\label{th_corr_CB_expansion_spin}
\Tr_{\mathcal H}
\Big[
\phi^{\mu(l)}(x)\,
q^D
\Big]
=
\sum_{\Delta}
C_{\Delta\Delta_\phi\Delta}^{(l)}\,
\mathcal F_{\Delta}^{\Delta_\phi,\mu(l)}(q,x)
+ \text{spinning contributions}\,,
\ee
where the OPE coefficients $C_{\Delta\Delta_\phi\Delta}^{(l)}$ are defined by  \eqref{3pt_scalar_scalar_spin}.\footnote{These coefficients vanish for odd $l$, reflecting the antisymmetry of the tensor structure \eqref{Z_def}. As a consequence, the one-point function of operators with odd spin $l$ receives no contributions from the scalar sector.} As in the scalar case, the thermal conformal block $\mathcal F_{\Delta}^{\Delta_\phi,\mu(l)}$ admits an expansion in the thermal parameter $q$, with coefficients determined by matrix elements of states in the corresponding conformal $o(d+1,1)$ module. Its leading behaviour in the low-temperature limit reads
\be
\label{tF_asymptot_spin}
\mathcal F_{\Delta}^{\Delta_\phi,\mu(l)}(q,x)
=
\frac{(\widehat{x}^{\mu})^l}{|x|^{\Delta_\phi}}\; q^{\Delta} \bigl( 1 + O(q) \bigr)\,.
\ee
Moreover, the $x$-dependence is contained in the prefactor, completely fixed by the residual symmetry $o(1,1)\oplus o(d)$, so that all nontrivial information is encoded in the dependence on $q$, cf. \eqref{tF_asymptot}. This makes it convenient to contract the thermal correlator with the traceless tensor $(\widehat{x}^{\mu})^l$. Inserting the scalar shadow projector \eqref{projector_def} into the one-point function \eqref{th_corr_CB_expansion_spin} then yields  the spinning \tcpw
\be
\label{tCPW_trace_spin}
(\widehat{x}^{\mu})^l\,
\Tr_{\mathcal H}
\Big[
\Pi_\Delta\,
\phi_{\mu(l)}(x)\,
q^D
\Big]
=
C_{\widetilde\Delta\,\Delta_\phi\,\Delta}^{(l)}
\,
\Upsilon_{\Delta}^{\Delta_\phi,l}(q,x)\,,
\ee
which possesses  the following integral representation:
\be
\label{thermal_spin_cpw}
\Upsilon_{\Delta}^{\Delta_\phi,l}(q,x)
=
q^\Delta
\int_{\mathbb R^d} {\rm d}^d x_0\;
V_{\widetilde\Delta\Delta_\phi\Delta}(x_0,x,qx_0)
\,
\bigl(\widehat{Z}_{\mu}(x|x_0, q x_0)\bigr)^l (\widehat{x}^{\mu})^l\,.
\ee
Using \eqref{gegenbauer} and \eqref{corr_V} as well as  rescaling the integration variable as $x_0 \to x_0/q$, we arrive at
\be
\label{tCPW_explicit_spin}
\Upsilon_{\Delta}^{\Delta_\phi,l}(q,x)
=
\frac{q^{\widetilde{\Delta}}}{(1-q)^{\widetilde{\Delta}_\phi}}
\int_{\mathbb{R}^d} {\rm d}^d x_0\;
\frac{
\widehat{C}_l^{(\frac d2 -1)}(t)}
{X_{01}^{c_1}\, X_{02}^{c_2}\, (x_0^2)^{c_3}}
\,,
\ee
where the parameters $c_{1,2,3}$ are the same as in the scalar case \eqref{t2_parameters}. The argument $t$ of the Gegenbauer polynomial coincides with \eqref{gegenbauer_t} upon setting $x_1=x$ and $x_2=qx$.

A direct comparison of the integral representations \eqref{tCPW_explicit_spin} and \eqref{defect_spin_cpw_final} shows that they coincide under the same parameter identification as in the scalar case, $c_1 = d_1$, $c_2 = d_2$, and $c_3 = d_3$. Explicitly, this relation takes the form
\be
\label{thermal_defect_spin}
\Upsilon_{\Delta}^{\Delta_\phi,l}(q,x)
=
q^{\widetilde{\Delta}}\,|x|^{\widetilde{\Delta}_\phi}\,
\Theta_{\Delta_\phi,l}^{\Delta\widetilde{\Delta}}(x,qx)\Big|_{p=0}\;.
\ee
Thus, the spinning one-point \tcpw is obtained from the spinning two-point \dcpw by restricting the latter to the diagonal configuration \eqref{diag2} and choosing the conformal dimensions as in \eqref{conformal_dimensions_defect}. This identity extends the thermal-defect correspondence to operators with non-zero spin.

\section{Thermal Casimir equations}
\label{sec:casimir}

Recall that the flat-space conformal blocks are uniquely determined as eigenfunctions of the quadratic Casimir operator, supplemented with appropriate asymptotic conditions \cite{Dolan:2003hv}. Under the correspondence discussed above, we require the behavior of this Casimir system in the diagonal limit.

However, expanding the quadratic Casimir equation in powers of the transverse coordinates near this configuration generically couples different orders of the expansion. Consequently, the quadratic operator alone does not yield a closed differential equation at the diagonal point. This difficulty is bypassed by supplementing the quadratic Casimir equation with an additional differential constraint, naturally provided by the fourth-order Casimir operator \cite{Hogervorst:2013kva}. The resulting system allows for a consistent reduction to the diagonal limit, yielding a closed equation for the reduced function. Below, we briefly adapt this procedure to our notation and conventions.

%
%
%

It is convenient to factor out the conformally covariant prefactor and introduce the dimensionless $t$-channel scalar \fcpw $\psi (q,\bar q)$ via
\be
\tPsi=
\frac{1}{X_{23}^{\frac{\Delta_2+\Delta_3}{2}}X_{14}^{\frac{\Delta_1+\Delta_4}{2}}}
\left(\frac{X_{24}}{X_{34}}\right)^{\frac{\Delta_{32}}{2}}
\left(\frac{X_{34}}{X_{13}}\right)^{\frac{\Delta_{14}}{2}}
\, \psi (q,\bar q)\,,
\ee
where the variables $q,\bar q$ are defined in \eqref{uv_q}, cf. \eqref{block}. Here and in what follows, we simplify the notation by suppressing the channel index  as well as the dependence on the  conformal dimensions. We  also introduce the differential operator
\be
\label{Casimir_operator}
{D}
=
D_q + D_{\bar q}
+ (d-2)\,\frac{q \bar q}{q - \bar q}
\Big((1-q)\partial_q - (1-\bar q)\partial_{\bar q}\Big)\,,
\ee
where
\be
\label{D_operator}
D_q
=
(1-q)\,q^2\,\partial_q^2
-
\Big(1+\frac{\Delta_{14}}{2}-\frac{\Delta_{32}}{2}\Big)q^2\,\partial_q
+
\frac{\Delta_{14}\Delta_{32}}{4}\,q
\quad \text{and} \quad
D_{\bar q} = D_{q}\Big|_{q\to \bar q}\,.
\ee
The Casimir equations then take the form \cite{Dolan:2000ut, Dolan:2003hv, Dolan:2011dv, Hogervorst:2013kva}:
\be
\label{Casimir_Flat}
\begin{aligned}
& \Big({D} -
\frac{1}{2}\,\Delta(\Delta-d)\Big)\,
\psi (q,\bar q) =0 \,,
\\[1mm]
& \Big(D_q - D_{\bar q}\Big)
\psi (q,\bar q)
=0\,,
\end{aligned}
\ee
where the first and second equations originate from the quadratic and quartic Casimir operators, respectively. For the scalar exchange considered here, the quartic Casimir operator, generally of fourth order, reduces to a second-order differential operator.  It is worth noting that in terms of the cross-ratios $u,v$ \eqref{uv_q} this  system is equivalent to the PDE system defining the Appell function $F_4$  \eqref{F4_equations}.

To take the diagonal limit, we expand the function $\psi (q,\bar q)$  around $q=\bar  q$. In terms of
\be
\eta=\frac{q-\bar q}{2}\,,
\ee
the diagonal limit corresponds to $\eta=0$. By construction, $\psi(q,\bar q)$ decomposes into  a sum of the (dimensionless) conformal and shadow blocks, which are symmetric under $q\leftrightarrow \bar q$, see \eqref{block_bare} and \eqref{uv_q}. As a consequence, the $\eta$-expansion contains only even powers of $\eta$:
\be
\psi(q,\bar q)
=
\psi_0(q)+\eta^2\psi_2(q)+\eta^4\psi_4(q)+\cdots\,.
\ee
Substituting this expansion into the Casimir equations \eqref{Casimir_Flat} and keeping the leading orders in $\eta$, we obtain a coupled  system of ODEs for $\psi_0(q)$ and $\psi_2(q)$. More precisely, the first and second equations \eqref{Casimir_Flat} yield, respectively,
\be
\label{diag_eq1_q}
\ba{c}
\dps
\,q^2\big(1-q\big)\,\psi_0''(q)
-\big(d+\Delta_{14}-\Delta_{32}\big)q^2\,\psi_0'(q)
+
\Big[\frac{\Delta_{14}\Delta_{32}}{2}\,q-\Delta(\Delta-d)\Big]\psi_0(q)
\\
[10pt]
\dps
+2(d-1)\,q^2(1-q)\,\psi_2(q)=0\,,
\ea
\ee
\be
\label{diag_eq2_q}
\ba{l}
\dps
\big(q-\frac32\,q^2\big) \psi_0''(q) -\left(2+\Delta_{14}-\Delta_{32}\right)q\,\psi_0'(q)
-\frac{\Delta_{14}\Delta_{32}}{2}\,\psi_0(q)
\\
[10pt]
\dps
\hspace{28mm}+ 2q^2\big(1-q\big)\,\psi_2'(q)
+\big(2q-(5+\Delta_{14}-\Delta_{32})q^2\big)\psi_2(q)=0\,.
\ea
\ee
These equations mix different orders of the $\eta$-expansion, reflecting the fact that the diagonal limit does not commute with the Casimir operators, thereby explaining why the quadratic Casimir equation alone is insufficient.

Solving \eqref{diag_eq1_q} algebraically for $\psi_2(q)$ and substituting the result into \eqref{diag_eq2_q} we obtain a  third-order ODE  for $\psi_0(q)$. Upon imposing the  parametrization of conformal dimensions \eqref{conformal_dimensions}, it takes the form
\be
\label{diag_ode_q}
\begin{aligned}
&2q^3(1-q)^2\,\psi_0'''(q)
+2q^2(1-q)\Big[(2d-4-3\Delta_\phi)q-d+2\Big]\psi_0''(q)
\\[1mm]
&\quad
+2q\Big[
\Big(-\Delta(\Delta-d)+(d-3\Delta_\phi-2)(d-\Delta_\phi-1)\Big)q^2
\\
&\qquad\qquad
+\Big(2\Delta(\Delta-d)-\Delta_\phi^2+3d\Delta_\phi-d^2-4\Delta_\phi+3d-2\Big)q
-\Delta(\Delta-d)
\Big]\psi_0'(q)
\\[1mm]
&\quad
+\Big[
-2(\Delta-\Delta_\phi)(d-\Delta-\Delta_\phi)(d-\Delta_\phi-1)\,q^2
\\
&\qquad
+\Big((2\Delta_\phi-d-1)\Delta(\Delta-d)+\Delta_\phi(d-\Delta_\phi)(d-1)\Big)q
+2\Delta(\Delta-d)
\Big]\psi_0(q)=0\,.
\end{aligned}
\ee
This equation is equivalent to the defining ODE for the hypergeometric function ${}_3F_2$  \eqref{3F2_equation}. Near $q=0$ it possesses  three linearly independent solutions among which  we choose the one compatible with the low-temperature asymptotics of the thermal conformal block \eqref{tF_asymptot}:
\be
\label{diag_solution_q}
\psi_0(q)
=
|x|^{\Delta_\phi}\,(1-q)^{\widetilde{\Delta}_\phi}\,
\mathcal F_{\Delta}^{\Delta_\phi}(q,x)\,.
\ee
The right-hand side depends solely  on $q$, since the product $|x|^{\Delta_\phi} \mathcal F_{\Delta}^{\Delta_\phi}(q,x)$ is $x$-independent, cf. \eqref{tF}. Thus, in addition to both the shadow integral  and AdS integral representations, the thermal block is accessible  by solving the diagonally reduced Casimir equations.

\section{Conclusion}
\label{sec:conclusion}

In this paper, we have explored the interplay between conformal blocks on flat, thermal, and defect backgrounds, using the shadow formalism as a unifying framework. By establishing a precise correspondence between these settings, we have shown that one-point thermal conformal blocks can be systematically obtained from their four-point flat-space and two-point defect counterparts. In particular, thermal partial waves with symmetric traceless external operators can be reformulated as defect bulk-channel partial waves. In this sense, the defect CFT description provides a natural extension of the flat-space prescription beyond the scalar sector. Moreover, using this correspondence, we have derived the thermal Casimir equation from the underlying symmetry of the flat-space theory, without introducing chemical potentials.

Looking forward, several directions deserve further investigation.
\begin{itemize}
\item A natural extension is to generalize our construction by including spinning exchanges in thermal correlators. In this case, both the thermal and defect CFT constructions involve multiple independent tensor structures, and the contribution of a fixed exchanged spin generally decomposes into several conformal blocks with distinct OPE coefficients. It is therefore not a priori clear whether the correspondence between tensor structures, or between channels, remains one-to-one. Understanding this interplay in the presence of multiple structures is an interesting open problem.

\item The inclusion of  chemical potentials is expected to involve defects of higher dimension $p>0$. Moreover, a simple counting of invariant variables suggests that more general correlators should be considered on the defect CFT side (see \cite{Bianchi:2026orb} for a recent discussion on higher-point defect correlators). For instance, in $d=3$, the thermal one-point block with a single chemical potential depends on three invariant variables \cite{Gobeil:2018fzy}, whereas the bulk two-point function in a defect CFT depends on at most two cross-ratios \cite{Gadde:2016fbj, Herzog:2020bqw, Lauria:2017wav}. In this context, the recent work \cite{Drukker:2026nvv}, which introduces novel defect configurations, may provide a useful framework.

\item A further promising direction arises from the shadow representations \eqref{defect_spin_cpw_final} and \eqref{tCPW_explicit_spin}, in which the entire spin dependence is captured by a single Gegenbauer polynomial. On the one hand, expanding this polynomial in powers of its argument represents both defect and thermal CPWs as finite linear combinations of integrals, each of which can be evaluated in terms of hypergeometric functions. On the other hand, the standard flat-space analysis \cite{Dolan:2000ut,Dolan:2011dv} suggests that this is not the most fruitful approach. Instead, the recursion relations for Gegenbauer polynomials induce corresponding recursion relations for the conformal blocks. It would be interesting to analyze the diagonal limit from this perspective, as it may provide a link to the more systematic formalism of weight-shifting operators \cite{Costa:2011dw, Karateev:2017jgd, Karateev:2018oml}.

\item It would also be useful to analyze the bulk-channel Casimir equation in the defect setup. Although its derivation is expected to parallel the flat-space case \cite{Billo:2016cpy, Gadde:2016fbj}, its application to thermal blocks involves several technical choices. In particular, the cross-ratios used in our construction differ from the variables commonly employed in defect CFT. In addition, there is some freedom in choosing the prefactor in \eqref{block}, which may lead to different forms of the resulting differential equations. Clarifying these choices and identifying the most convenient formulation is left for future work.

\item A related problem is to apply this formalism to higher-point thermal correlators on $S^1_\beta \times S^{d-1}$, which requires a more detailed analysis of the underlying conformal integrals. Beyond the one-point case, one encounters a richer structure of exchange channels \cite{Alkalaev:2023evp, Ammon:2025cdz}, making the identification of the corresponding blocks more challenging.\footnote{In flat-space CFT multipoint conformal blocks in the comb and other channels were extensively studied in \cite{Alkalaev:2015fbw,Rosenhaus:2018zqn,Fortin:2019zkm,Fortin:2020zxw,Fortin:2020yjz,Fortin:2020bfq, Poland:2021xjs, Ammon:2024axd, Buric:2020dyz, Buric:2021ywo, Buric:2021ttm, Buric:2021kgy}.}

\item A holographic interpretation of the proposed correspondence would be particularly interesting. While several works have already investigated aspects of thermal and defect AdS/CFT correspondence (see e.g. \cite{Karch:2000gx,DeWolfe:2001pq, Rastelli:2017ecj, Alday:2020eua, Parisini:2023nbd, Parisini:2022wkb, Karlsson:2021duj, Karlsson:2019qfi, Dodelson:2023vrw, Dodelson:2022yvn, Georgiou:2025mgg, Georgiou:2025wbg, Georgiou:2026zhf, Linardopoulos:2026mut, Gimenez-Grau:2023fcy, Grinberg:2020fdj,Rodriguez-Gomez:2021pfh,David:2022nfn, Chakravarty:2025ncy, Ceplak:2025dds, Afkhami-Jeddi:2025wra, Rodriguez-Gomez:2021mkk, Dodelson:2020lal, Dodelson:2023nnr, Ceplak:2024bja, Dodelson:2022eiz}), it remains important to systematically understand how the AdS bulk configurations associated with flat-space, thermal, and defect CPWs are related.

\item Finally, a natural step is to study defects at finite temperature \cite{Barrat:2024aoa, Giombi:2026kdz}.

\end{itemize}


\noindent \textbf{Acknowledgements.} We are grateful to Nikita Misuna for fruitful discussions and collaboration on a related project.  Our work was supported by the Foundation for the Advancement of Theoretical Physics and Mathematics “BASIS”.

\appendix

\section{Generalized hypergeometric functions}
\label{app:functions}
In this appendix, we collect several standard formulas from \cite{Bateman:100233, AndrewsAskeyRoy1999} and then derive the relation \eqref{F4_diagonal}, which, to the best of our knowledge, is new.

\paragraph{Definitions.} The generalized hypergeometric function ${}_3F_2$ is defined by the power series
\be
\label{3F2_def}
{}_3F_2
\left[
\begin{array}{c}
\alpha_1,\ \alpha_2,\ \alpha_3 \\
\beta_1,\ \beta_2
\end{array}
\Bigg|\, z
\right]
=
\sum_{n=0}^{\infty}
\frac{(\alpha_1)_n(\alpha_2)_n(\alpha_3)_n}
{(\beta_1)_n(\beta_2)_n}\,
\frac{z^n}{n!}\,,
\ee
which converges for $|z|<1$.

The fourth Appell function is defined by the double power series
\be
\label{F4}
F_4
\left[
\begin{array}{c}
\alpha,\ \beta \\
\gamma,\ \gamma'
\end{array}
\Bigg|\, u,v
\right]
=
\sum_{m,n=0}^{\infty}
\frac{(\alpha)_{m+n}(\beta)_{m+n}}
{(\gamma)_m(\gamma')_n}\,
\frac{u^m v^n}{m!\,n!}\,,
\ee
which converges in the domain $\sqrt{|u|}+\sqrt{|v|}<1$.

We use the following notation for $\Gamma$-functions:
\be
\label{gammas}
\Gamma
\left[
\begin{array}{l l}
a_1, \ldots, a_n \\
b_1, \ldots, b_m
\end{array}
\right] = \frac{\Gamma(a_1, \ldots, a_n)}{\Gamma(b_1, \ldots, b_m)}\,, \qquad \Gamma(a_1, \ldots, a_n) = \prod_{i=1}^{n} \Gamma(a_i)\,.
\ee
\paragraph{Differential equations.} The function ${}_3F_2$ satisfies a third-order Fuchsian differential equation. Introducing the Euler operator $\theta = z\, d/dz$, one may write it in the standard form
\be
\label{3F2_equation}
\Big[
\theta(\theta+\beta_1-1)(\theta+\beta_2-1)
-
z(\theta+\alpha_1)(\theta+\alpha_2)(\theta+\alpha_3)
\Big]\,
{}_3F_2
\left[
\begin{array}{c}
\alpha_1,\ \alpha_2,\ \alpha_3 \\
\beta_1,\ \beta_2
\end{array}
\Bigg|\, z
\right]
=0\,.
\ee
This third-order ODE  admits three linearly independent local solutions in a neighbourhood of $z=0$ (for generic values of the parameters).

The Appell function $F_4$ satisfies a system of two second-order partial differential equations. Introducing the Euler operators $\theta_u=u\, \partial/\partial u$ and $\theta_v=v\, \partial/\partial v$, one may represent  this system as
\be
\label{F4_equations}
\begin{aligned}
&\Big[
\theta_u(\theta_u+\gamma-1)
-u(\theta_u+\theta_v+\alpha)(\theta_u+\theta_v+\beta)
\Big]\,
F_4
\left[
\begin{array}{c}
\alpha,\ \beta \\
\gamma,\ \gamma'
\end{array}
\Bigg|\, u,v
\right]
=0\,,
\\[2mm]
&\Big[
\theta_v(\theta_v+\gamma'-1)
-v(\theta_u+\theta_v+\alpha)(\theta_u+\theta_v+\beta)
\Big]\,
F_4
\left[
\begin{array}{c}
\alpha,\ \beta \\
\gamma,\ \gamma'
\end{array}
\Bigg|\, u,v
\right]
=0\,.
\end{aligned}
\ee
For general values of the parameters, this system has four linearly independent local solutions in a neighbourhood of $(u,v)=(0,0)$.

\paragraph{Transformation formulas.} We need two identities
\be
\label{2F1_quadr_transform_1}
{}_2F_1
\left[
\begin{array}{c}
\alpha,\ \beta \\
1+\alpha-\beta
\end{array}
\Bigg|\, v
\right]
=
(1-\sqrt{v})^{-2\alpha}
{}_2F_1
\left[
\begin{array}{c}
\alpha,\ \alpha-\beta+\frac{1}{2} \\
1+2\alpha-2\beta
\end{array}
\Bigg|\, \frac{-4\sqrt{v}}{(1-\sqrt{v})^2}
\right].
\ee
\be
\label{2F1_quadr_transform_2}
{}_2F_1
\left[
\begin{array}{c}
\alpha,\ \beta \\
2\beta
\end{array}
\Bigg|\, \frac{4v}{(1+v)^2}
\right]
=
(1+v)^{2\alpha}\,
{}_2F_1
\left[
\begin{array}{c}
\alpha,\ \alpha-\beta+\frac12 \\
\beta+\frac12
\end{array}
\Bigg|\, v^2
\right].
\ee

\paragraph{Reduction formula.}
\begin{prop}
The following identity holds:
\be
\label{F4_diagonal}
\begin{array}{l}
\dps\hspace{-10mm}F_4
\left[
\begin{array}{c}
\alpha,\ \beta \\
\gamma,\ 1+\alpha-\beta
\end{array}
\Bigg|\, (1-\sqrt{v})^2, v
\right]
=
(1-\sqrt{v})^{-2\alpha}
\Gamma\left[
\begin{array}{c}
\gamma,\ \gamma-\alpha-\beta \\
\gamma-\alpha,\ \gamma-\beta
\end{array}
\right]
\\[6mm]
\dps
\hspace{40mm}\times {}_3F_2
\left[
\begin{array}{c}
\alpha,\ \alpha-\beta+\frac{1}{2},\ 1+\alpha-\gamma \\
1+2\alpha-2\beta,\ 1+\alpha+\beta-\gamma
\end{array}
\Bigg|\, \frac{-4\sqrt{v}}{(1-\sqrt{v})^2}
\right].
\end{array}
\ee
\end{prop}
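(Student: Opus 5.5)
Set $s=\sqrt v$ and $u=(1-s)^2$. The plan is to sum the double series \eqref{F4} over one index first, using the Gauss summation theorem, and then recognise the remaining single series via the quadratic transformation \eqref{2F1_quadr_transform_1}.

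\textbf{Step 1: expand in the first variable.} Write the Appell function as a power series in $u$ with ${}_2F_1$ coefficients in $v$:
\be
F_4\left[\begin{array}{c}\alpha,\ \beta\\ \gamma,\ \gamma'\end{array}\Bigg|\,u,v\right]
=\sum_{m\ge0}\frac{(\alpha)_m(\beta)_m}{(\gamma)_m\,m!}\,u^m\;
{}_2F_1\left[\begin{array}{c}\alpha+m,\ \beta+m\\ \gamma'\end{array}\Bigg|\,v\right].
\ee

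\textbf{Step 2: apply the quadratic transformation.} With $\gamma'=1+\alpha-\beta$ we have $\gamma'=1+(\alpha+m)-(\beta+m)$ for every $m$. So each inner ${}_2F_1$ satisfies the hypothesis of \eqref{2F1_quadr_transform_1}, with $\alpha\to\alpha+m$, $\beta\to\beta+m$. It becomes
\be
(1-s)^{-2\alpha-2m}\,
{}_2F_1\left[\begin{array}{c}\alpha+m,\ \alpha-\beta+\frac12\\ 1+2\alpha-2\beta\end{array}\Bigg|\,w\right],
\qquad w=\frac{-4s}{(1-s)^2}.
\ee
The factor $(1-s)^{-2m}$ cancels $u^m=(1-s)^{2m}$ exactly. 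The lower parameter and the second upper parameter do not depend on $m$. We are left with
\be
(1-s)^{-2\alpha}\sum_{m,n}\frac{(\alpha)_m(\beta)_m(\alpha+m)_n\,(\alpha-\beta+\tfrac12)_n}{(\gamma)_m\,m!\,(1+2\alpha-2\beta)_n\,n!}\,w^n .
\ee

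\textbf{Step 3: sum over $m$ with $n$ fixed.} Use $(\alpha)_m(\alpha+m)_n=(\alpha)_n(\alpha+n)_m$. The $m$-sum is then ${}_2F_1(\alpha+n,\beta;\gamma;1)$. By Gauss's theorem this equals
\be
\Gamma\left[\begin{array}{c}\gamma,\ \gamma-\alpha-\beta-n\\ \gamma-\alpha-n,\ \gamma-\beta\end{array}\right].
\ee
Writing $\Gamma(c-n)=\Gamma(c)(-1)^n/(1-c)_n$, the $n$-dependent ratio becomes $(1+\alpha-\gamma)_n/(1+\alpha+\beta-\gamma)_n$; the signs $(-1)^n$ cancel between numerator and denominator. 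The remaining $n$-sum is therefore exactly the ${}_3F_2$ in \eqref{F4_diagonal}, multiplied by $(1-s)^{-2\alpha}$ and the stated $\Gamma$-prefactor.

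\textbf{Main obstacle: convergence and the exchange of summations.} The point $u=(1-\sqrt v)^2$ lies on the boundary of the $F_4$ convergence domain. Gauss summation at argument $1$ also requires $\operatorname{Re}(\gamma-\alpha-\beta-n)>0$, which fails for large $n$. I would handle this in three parts.
\begin{itemize}
\item First, prove the identity for $u=\lambda(1-s)^2$ with $\lambda<1$. Then $u/(1-s)^2=\lambda$, and the $m$-sum is a ${}_2F_1$ at argument $\lambda$, where absolute convergence justifies the rearrangement.
\item Next, take $\lambda\to1^-$ by Abel's theorem, under the restriction $\operatorname{Re}(\gamma-\alpha-\beta)$ sufficiently large, with small $s$ so that $|w|<1$.
\item Finally, extend to general parameters by analytic continuation in $\gamma$. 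Alternatively, one can read the ${}_3F_2$ series as the analytic continuation that defines the diagonal value of $F_4$ in the sense used in the paper.
\end{itemize}
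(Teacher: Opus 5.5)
\textbf{There is a genuine gap: the statement itself fails for generic parameters.} Your Steps 1--3 are the paper's argument exactly: split via \eqref{F4_split}, apply \eqref{2F1_quadr_transform_1} for each $m$, swap the sums with $(\alpha)_m(\alpha+m)_n=(\alpha)_n(\alpha+n)_m$, then use Gauss. Your Pochhammer bookkeeping is correct. The paper also performs the interchange and the Gauss summation formally, without comment. You are right that this is the weak point, but your repair does not work.

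\emph{Why the repair fails.} At $u=\lambda(1-s)^2$ the double series is \emph{not} absolutely convergent for all $\lambda<1$. Since $\sum_n\frac{(\alpha+m)_n}{n!}|w|^n=(1-|w|)^{-\alpha-m}$, absolute convergence needs $\lambda<1-|w|$. Correspondingly, ${}_2F_1(\alpha+n,\beta;\gamma;\lambda)$ grows like $(1-\lambda)^{-n}$ up to powers of $n$, so the rearranged $n$-series diverges for $\lambda>1-|w|$. Moreover, the termwise limit $\lambda\to1^-$ fails for every $n>\operatorname{Re}(\gamma-\alpha-\beta)$, however large you take $\operatorname{Re}(\gamma-\alpha-\beta)$, so Abel's theorem has nothing to act on. This is not a technicality. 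The lost terms resum to a nonzero contribution, so the identity is false for generic parameters, and neither continuation in $\gamma$ nor reinterpreting the left-hand side can rescue it.

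\emph{What the discarded term is.} Keep the $m$-sum exact instead. Take positive parameters with $e=\alpha-\beta+\frac12>0$. Euler's integral for ${}_2F_1(\alpha+m,e;2e;w)$, together with Tonelli, gives
\[
\text{LHS}=\frac{\Gamma(2e)\,(1-s)^{-2\alpha}}{\Gamma(e)^{2}}\int_0^1 t^{e-1}(1-t)^{e-1}\,(1-wt)^{-\alpha}\,{}_2F_1\Big(\alpha,\beta;\gamma;\frac{1}{1-wt}\Big)\,dt .
\]
The $z\to1$ connection formula plus Pfaff splits the integrand into two pieces:
\begin{itemize}
\item $\frac{\Gamma(\gamma)\Gamma(\gamma-\alpha-\beta)}{\Gamma(\gamma-\alpha)\Gamma(\gamma-\beta)}\,{}_2F_1(\alpha,1+\alpha-\gamma;1+\alpha+\beta-\gamma;wt)$, which integrates (for $|w|<1$) to exactly the right-hand side of \eqref{F4_diagonal};
\item $\frac{\Gamma(\gamma)\Gamma(\alpha+\beta-\gamma)}{\Gamma(\alpha)\Gamma(\beta)}(-wt)^{\gamma-\alpha-\beta}\,{}_2F_1(\gamma-\beta,1-\beta;1+\gamma-\alpha-\beta;wt)$, which is precisely what termwise Gauss summation throws away.
\end{itemize}

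\emph{Explicit counterexample.} Take $\alpha=\beta=\frac12$, $\gamma=\frac32$, so $\gamma'=1$. The ${}_3F_2$ collapses to $1$ because $1+\alpha-\gamma=0$, and the claimed value is $\frac{\pi}{2(1-s)}$. The positive-term, convergent left-hand side instead equals $\frac{1}{1-s}\big(\frac{\pi}{2}-\frac{2}{\pi}\int_0^{\pi/2}\arctan(\tfrac{2\sqrt{s}}{1-s}\sin\theta)\,d\theta\big)=\frac{\pi}{2}-\frac{4}{\pi}\sqrt{s}+O(s)$. At $s=0.01$ this is $1.458$ versus $1.587$, and a direct summation of the $F_4$ series confirms $1.458$.

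\emph{What survives.} The identity holds when the extra term vanishes, e.g.\ in the terminating cases $\alpha$ or $\beta\in\mathbb Z_{\le0}$, where Gauss becomes Chu--Vandermonde and all sums are finite. For the paper's actual use, note what happens in \eqref{block_bare} with \eqref{conformal_dimensions}. The two $F_4$'s there produce the same extra function, with coefficients proportional to $\Gamma(\gamma_1)\Gamma(1-\gamma_1)$ and $\Gamma(\gamma_1-1)\Gamma(2-\gamma_1)=-\Gamma(\gamma_1)\Gamma(1-\gamma_1)$, where $\gamma_1=1+\frac{\Delta_{14}-\Delta_{32}}{2}$. These cancel, so \eqref{tF} survives. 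A correct proof has to be built around this pairing, not around \eqref{F4_diagonal} for a single $F_4$.
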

\begin{proof}
To establish \eqref{F4_diagonal}, we first decompose the Appell function \eqref{F4} using the splitting relation
\be
\label{F4_split}
F_4
\left[
\begin{array}{c}
\alpha,\ \beta \\
\gamma,\ \gamma'
\end{array}
\Bigg|\, u,v
\right]
=
\sum_{m=0}^{\infty}
\frac{(\alpha)_m(\beta)_m}{(\gamma)_m}\,
\frac{u^m}{m!}\,
{}_2F_1
\left[
\begin{array}{c}
\alpha+m,\ \beta+m \\
\gamma'
\end{array}
\Bigg|\, v
\right].
\ee
After imposing the constraint on parameters $\gamma' = 1 + \alpha - \beta$, the quadratic transformation \eqref{2F1_quadr_transform_1} can be applied. Then, going to the diagonal limit $u=(1-\sqrt{v})^2$, we find
\be
\label{F4_reduction_step}
F_4
\left[
\begin{array}{c}
\alpha,\ \beta \\
\gamma,\ \gamma'
\end{array}
\Bigg|\, u,v
\right]
=
\sum_{n=0}^{\infty}
\frac{(\alpha)_n\left(\alpha-\beta+\frac{1}{2}\right)_n}
{\left(1+2\alpha-2\beta\right)_n}\,
\frac{1}{n!}
\left(\frac{-4\sqrt{v}}{(1-\sqrt{v})^2}\right)^n
{}_2F_1
\left[
\begin{array}{c}
\alpha+n,\ \beta \\
\gamma
\end{array}
\Bigg|\, 1
\right],
\ee
where we used the identity $(\alpha)_m(\alpha+m)_n=(\alpha)_n(\alpha+n)_m$. Finally, using the Gauss summation formula
\be
\label{Gauss_unit}
{}_2F_1
\left[
\begin{array}{c}
\alpha,\ \beta \\
\gamma
\end{array}
\Bigg|\, 1
\right]
=
\frac{\Gamma(\gamma)\Gamma(\gamma-\alpha-\beta)}
{\Gamma(\gamma-\alpha)\Gamma(\gamma-\beta)}\,,
\ee
we immediately arrive at \eqref{F4_diagonal}.
\end{proof}


\providecommand{\href}[2]{#2}\begingroup\raggedright\endgroup

\end{document}